\renewcommand{\epsilon}{\varepsilon}
\renewcommand{\phi}{\varphi}
\newlist{enumerati}{enumerate}{10}
\setlist[enumerati]{label=\emph{(\roman*)}, ref=\emph{(\roman*)}}
\title{Cellular Monads from Positive GSOS Specifications}
\author{Tom Hirschowitz\thanks{Thanks to Jorge P\'erez and Jurriaan Rot for the invitation, and to the referees for helpful comments.}
  \institute{Univ. Grenoble Alpes, Univ. Savoie Mont Blanc, CNRS, LAMA\\ 73000 Chamb\'ery, France}
  \email{tom.hirschowitz@univ-smb.fr}
}
\begin{document}
\maketitle

\begin{abstract}
  We give a leisurely introduction to our abstract framework for
  operational semantics based on cellular monads on transition
  categories. Furthermore, we relate it for the first time to an
  existing format, by showing that all Positive GSOS specifications
  generate cellular monads whose free algebras are all
  compositional. As a consequence, we recover the known result that
  bisimilarity is a congruence in the generated labelled transition
  system.
\end{abstract}

\section{Introduction}\label{sec:introduction}
\subsection{Motivation}
In the vast majority of foundational research on programming
languages, although ideas are thought of as widely applicable, they
are presented on \emph{one}, simple example.  Typically, there is a
tension between simplicity of exposition, leading to the minimal
language making the idea relevant, and significance, leading to the
most expressive one.  Strikingly, the scope of the idea is often
mostly clear to the experts, but no attempt is made at stating it
precisely.  The reason for this is that the mathematical concepts
needed for even only making such statements are lacking. Indeed, one
needs to be able to say something like: ``for all programming
languages of such shape, the following holds''. But there simply is no
widely accepted mathematical notion of programming language.

Such a general notion should account for both
\begin{enumerati}
\item the interaction between syntax and dynamics, as involved in,
  e.g., structural operational semantics~\cite{PlotkinSOS}, or in the
  statement of results like type soundness, congruence of program
  equivalence, or compiler correctness, and
\item denotational semantics, in the sense of including not only
  operational, syntactic models but also others, typically ones in
  which program equivalence is coarser.
\end{enumerati}
Typically, standard formats~\cite{mousavi2007sos} elude denotational
semantics, and are exclusively syntactic.  To our knowledge, the only
such proposals meeting all these criteria are \emph{functorial
  operational semantics}, a.k.a.\ \emph{bialgebraic
  semantics}~\cite{plotkin:turi:bialgebraic}, and a few
variants~\cite{DBLP:journals/tcs/CorradiniHM02,DBLP:conf/lics/Staton08}.
This approach has been deeply developed, and shown to extend smoothly
to various settings, e.g., non-deterministic and probabilistic
languages.  However, two important extensions have proved more
difficult.
\begin{itemize}
\item The treatment of languages with variable binding is
  significantly more technical than the basic
  setting~\cite{DBLP:conf/lics/FioreT01,DBLP:conf/lics/FioreS06,DBLP:conf/lics/Staton08}.
\item More importantly, the bialgebraic study of higher-order
  languages like the $\lambda $-calculus or the higher-order $\pi $-calculus is
  only in its infancy~\cite{Peressotti}.
\end{itemize}
This leaves some room for exploring potential alternatives. 

  \subsection{Context}
  In recent work~\cite{hirschowitz:hal-01815328}, a new approach to
  abstract operational semantics was proposed, and its expressive
  power was demonstrated by proving for the first time an abstract
  soundness result for \emph{bisimulation up to context} in the
  presence of variable binding.  Bisimulation up to context is an
  efficient technique~\cite[Chapter~6]{SangioRutten} for proving program
  equivalences, which had previously been proved correct in the
  bialgebraic setting~\cite{DBLP:conf/csl/BonchiPPR14}, but only
  without binding.  % The new framework is also both more elementary
  % than bialgebraic operational semantics, and closer to standard
  % reasoning.

Its novelty mainly resides in the following two technical features.
\begin{description}
\item[Transition categories] First, while standard operational
  semantics is based on \emph{labelled transition systems}, this is
  both generalised and abstracted over in the framework.
  \begin{description}
  \item[Generalisation] Indeed, in the examples, instead of standard
    labelled transition systems, we use a slight generalisation
    similar in spirit to~\cite{DBLP:conf/ifipTCS/Fiore00}, essentially
    from relations to graphs, i.e., possibly with several transitions
    between two states.  This simple, harmless generalisation brings
    in a lot of useful structure, typically that of a
    \emph{topos}~\cite{MM}, which is unavailable at this level in
    bialgebraic operational semantics.
  \item[Abstraction] In full generality, the framework takes as a
    parameter a \emph{transition category}, a typical example of which
    is given by such generalised transition systems.  For any object
    of a given transition category, bisimulation may be defined by
    lifting, following an idea from~\cite{DBLP:conf/lics/JoyalNW93}.
  \end{description}
\item[Combinatorial category theory] A second technical innovation is
  the use of advanced combinatorial category theory.  To start with,
  \emph{familial monads}~\cite{DBLP:journals/mscs/CarboniJ95}, or
  rather their recent \emph{cellular}
  variant~\cite{garner:hal-01246365}, provide a notion of evaluation
  context for both programs and transitions, at the abstract
  level. Standard reasoning by induction on context thus becomes
  simple algebraic calculation.  A second, crucial notion is
  cofibrantly generated factorisation systems, a notion from homotopy
  theory~\cite{Hovey,riehl} which, together with cellularity, allows
  for a conceptually simple, yet relevant characterisation of
  well-behaved transition contexts.
\end{description}
%\advance\textheight 13.6pt
\advance\textheight 13.6pt

Each instance of the framework is then constructed as follows.
\begin{description}
\item[Type of transition system] The first step is the choice of a
  type of transition system, which may involve different kinds of
  states (e.g., initial or final ones), the set of labels to be put on
  transitions, etc.  Technically, this amounts to fixing a transition
  category ${\mathcal C} $.  This also fixes the relevant notion of bisimulation,
  hence bisimilarity.
\item[Transition rules] The second step consists in defining the
  dynamics of the considered language, which is usually specified
  through a set of inference rules.  This comes in as a monad $T$ on
  ${\mathcal C} $, whose algebras are essentially the transition systems
  satisfying the given inference rules.  The standard, syntactic
  transition system is typically the free algebra $T(0)$.  This fixes
  the relevant notion of context closure. In this setting, congruence
  of bisimilarity $\sim _X$ on a $T$-algebra $X$ is the fact that
  $T(\sim _X) \rightarrow  X^2 $ factors through ${\sim _X} \rightarrow  X^2 $
  (see~\eqref{eq:cong:bisim} on page~\pageref{eq:cong:bisim}).
\end{description}
One of the main
results~\cite[Corollary~4.30]{hirschowitz:hal-01815328} is that if the
considered algebra is \emph{compositional}, in the sense that its
structure map $T(X) \rightarrow  X$ is a functional bisimulation, and if the
monad $T$ satisfies an additional condition, then bisimilarity is
indeed a congruence. The latter condition is called
\emph{${\mathbf T} _s $-familiality} in~\cite{hirschowitz:hal-01815328}, but we
will here call it \emph{cellularity}, because it is a specialisation
of cellularity in the sense of~\cite{garner:hal-01246365} to familial
functors.  As mentioned above, a second main
result~\cite[Corollary~5.15]{hirschowitz:hal-01815328} is that under a
different condition called \emph{${\mathbf T} _s ^\vee $-familiality}, bisimulation up
to context is sound.

\subsection{Contribution}
One of the main issues with cellular monads $T$ on transition
categories ${\mathcal C} $ is the lack of an efficient generation mechanism, i.e.,
a mathematical construction that produces pairs $({\mathcal C} ,T)$ from more
basic data.  In this paper, we initiate the search for such generating
constructions by showing that an existing simple format,
\emph{Positive GSOS}~\cite{GSOS}, always produces cellular monads
whose free algebras are compositional. As a consequence, we recover
(Theorem~\ref{thm}) the known result that bisimilarity is a congruence
in all free algebras.

As this is an invited contribution, we briefly introduce the approach
at an expository, rather concrete level. In particular, the only
considered transition category is the one of generalised labelled
transition systems in the sense alluded to above.  Finally, our proofs
are meant to be instructive rather than fully detailed.

\subsection{Plan}
In \partie \ref{sec:labelled}, we explain our generalisation of labelled
transition systems, and bisimulation by lifting.  In
\partie \ref{sec:positive}, we recall Positive GSOS specifications $\Sigma $ and
show how they generate monads $T_\Sigma $.  In \partie \ref{sec:models}, we argue
that algebras for the obtained monad $T_\Sigma $ are a good notion of model
for the considered Positive GSOS specification.  We then state
congruence of bisimilarity in categorical terms, and quickly reduce it
to two key properties: (i) compositionality of the considered algebra
and (ii) preservation of functional bisimulations by $T_\Sigma $.

We deal with~(i) in
\partie \ref{sec:compositionality}, where we show that when $T_\Sigma $ is obtained
from a Positive GSOS specification, all free algebras are
compositional. In \partie \ref{sec:preserving}, we then attack (ii), by
further reducing it to familiality and cellularity.  The remaining
sections develop these ideas.

In \partie \ref{sec:familiality}, we define familiality for functors (as
opposed to monads), and show that $T_\Sigma $ is a familial functor. In
\partie \ref{sec:factorisation}, we establish some factorisation properties
of familial functors which were announced and used
in~\partie \ref{sec:models} to reduce congruence of bisimilarity to
compositionality and preservation of bisimulation.  We then introduce
cellularity in~\partie \ref{sec:cellularity}, and show that $T_\Sigma $ is indeed
cellular. Finally, we wrap up in~\partie \ref{sec:monads:familiality} by
defining familiality for monads (which is slightly more demanding than
for mere functors), and showing that $T_\Sigma $ does form a familial monad.
This fills a hole left open in~\partie \ref{sec:compositionality}, thus
allowing us to state the main theorem.

Finally, we conclude and give some perspective in~\partie \ref{sec:conclu}.

\subsection{Prerequisites}
We assume familiarity with basic category
theory~\cite{MacLane:cwm,LeinsterCats}, including categories,
functors, natural transformations, monads and their algebras, and the
Yoneda lemma.

\section{Labelled transition systems as presheaves}\label{sec:labelled}
\subsection{Generalised transition systems}
A standard SOS specification is given by a signature, plus a family of
transition rules over a fixed set ${\mathbb A} $ of labels.  The set ${\mathbb A} $ fixes
the relevant kind of transition system, and we interpret this by
constructing a corresponding category of (generalised) transition
systems.  Given any set ${\mathbb A} $, let $\Gamma _{\mathbb A} $ denote the graph with
  \begin{itemize}
  \item vertex set ${\mathbb A}  + 1$, i.e., vertices are elements of ${\mathbb A} $,
    denoted by $[a]$ for $a \in  {\mathbb A} $, plus a special vertex $\star $,
  \item two edges $s^a ,t^a : \star  \rightarrow  [a]$, for all $a \in  {\mathbb A} $.
  \end{itemize}
  Pictorially, $\Gamma _{\mathbb A} $ looks like this:
  \begin{center}
    \diag{%
      \dots  \& {[a]} \& \dots  \& \mbox{($a \in  {\mathbb A} $)} \\
      \& \star \rlap{.} %
    }{%
      (m-2-2) edge[labell={s^a },bend left] (m-1-2) %
      (m-2-2) edge[labelr={t^a },bend right] (m-1-2) %
    }
  \end{center}
  There are no composable edges in $\Gamma _A$, so, adding formal identity
  arrows, it readily forms a category, which we also denote by $\Gamma _A$.

\begin{definition}
  The \emph{category of transition systems induced by ${\mathbb A} $} is
  $\psh{\Gamma _{\mathbb A} }$, the category of presheaves over $\Gamma _{\mathbb A} $.
\end{definition}
To see what presheaves over $\Gamma _{\mathbb A} $ have to do with transition systems,
let us observe that a presheaf $X \in  \psh{\Gamma _{\mathbb A} }$ consists of a set
$X(\star )$ of \emph{states}, together with, for each $a \in  {\mathbb A} $, a set of
\emph{transitions} $e \in  X[a]$ with source and target maps
$X(s^a ),X(t^a ): X[a] \rightarrow  X(\star )$.  Our notion is thus only slightly more
general than standard labelled transition systems over ${\mathbb A} $, in that it
allows several transitions with the same label between two given
states.
\begin{remark}
  The category $\psh{\Gamma _{\mathbb A} }$ may be viewed as a category of labelled
  graphs.  Indeed, letting $\Omega _{\mathbb A} $ denote the one-vertex graph with ${\mathbb A} $
  loops on it, we have by well-known abstract nonsense an equivalence
  ${\mathbf G} {\mathbf p} {\mathbf h} /\Omega _{\mathbb A}  \simeq  \psh{\Gamma _{\mathbb A} }$ of categories.  (This is due to the fact that
  $\Gamma _{\mathbb A} $ is isomorphic to the \emph{category of elements} of $\Omega _{\mathbb A} $, see
  Definition~\ref{def:elts} below.)
\end{remark}
\begin{notation}
  For any $X \in  \psh{\Gamma _{\mathbb A} }$, we denote the action of morphisms in $\Gamma _{\mathbb A} $
  with a dot.  E.g., if $e \in  X[a]$, then $e \cdot  s^a  \in  X(\star )$ is its
  source. We also sometimes abbreviate $s^a $ and $t^a $ to just $s$ and
  $t$.
\end{notation}
\begin{example}
  For languages like CCS~\cite{Milner80}, we let ${\mathbb A}  = {\mathcal N}  + {\mathcal N}  + 1$
  denote the disjoint union of a fixed set ${\mathcal N} $ of \emph{channel names}
  with itself and the singleton $1$.  Elements of the first term are
  denoted by $\abar$, for $a \in  {\mathcal N} $, and are used for output
  transitions, while elements of the second term, simply denoted by
  $a$, are used for input transitions. Finally, the unique element of
  the third term is denoted by $\tau $ and used for silent transitions.
  E.g., the labelled transition system
  \begin{center}
  \diag{|(x)| x \& |(y)| y \&
    |(z)| z %
  }{%
    (y) edge[labela={\abar}] (x) %
    edge[bend left=10,labela={b}] (z) %
    edge[bend right=10,labelb={b}] (z) %
    (z) edge[loop right,labela={a}] (z)
  }
\end{center}
is modelled by the presheaf
  $X$ with
  \begin{center}
  $\begin{array}[t]{c} X(\star ) = \ens{x,y,z}
    \end{array}$
    \hfil
    $\begin{array}[t]{c}
       X(\abar) = \ens{e} \\
       X(b) = \ens{f,f'} \\
       X(a) = \ens{g} 
    \end{array}$
    \hfil
    $\begin{array}[t]{c}
       x = e \cdot  t \\
       y = e \cdot  s = f \cdot  s = f' \cdot  s \\
       z = f \cdot  t = f' \cdot  t = g \cdot  s = g \cdot  t.
    \end{array}$
  \end{center}
\end{example}

\subsection{Bisimulation}
Returning to generalised transition systems, we may define
bisimulation categorically in the following way.  Morphisms
$f: X \rightarrow  Y$, i.e., natural transformations, are the analogue in our
setting of standard functional simulations. Indeed, given any
transition $e: x \xto{a} x'$ in $X$, then $f(x)$ sure has an $a$
transition to some state related to $x'$: this is simply $f(e)$!  The
next step is to define an analogue of functional bisimulation.  For
this, let us observe that the base category $\Gamma _{\mathbb A} $ embeds into the
presheaf category $\psh{\Gamma _{\mathbb A} }$ -- this is just the Yoneda embedding
${\mathbf y} : \Gamma _{\mathbb A}  \rightarrow  \psh{\Gamma _{\mathbb A} }$, directly specialised to our setting for
readability:
\begin{itemize}
\item the state object $\star $ embeds as the one-vertex graph ${\mathbf y} _\star $ with no transition;
\item any transition object $[a]$ embeds as the graph ${\mathbf y} _{[a]}$ with
  one $a$-transition between two distinct vertices;
\item the morphisms $s^a , t^a : \star  \rightarrow  [a]$ embed as the morphisms
  ${\mathbf y} _\star  \rightarrow  {\mathbf y} _{[a]}$ picking up the source and target, respectively, of
  the given transition.
\end{itemize}
\begin{notation}
  We often omit ${\mathbf y} $, treating it as an implicit coercion.
\end{notation}
\begin{definition}
  Let $f: X \rightarrow  Y$ be a \emph{functional bisimulation} whenever all
  commuting squares as the solid part below, admit a (potentially
  non-unique) \emph{lifting} $k$ as shown, i.e., a morphism making
  both triangles commute.
  \begin{center}
    \diag{%
      \star  \& X \\
      {[a]} \& Y %
    }{%
      (m-1-1) edge[labela={x}] (m-1-2) %
      edge[labell={s^a }] (m-2-1) %
      (m-2-1) edge[labelb={e}] (m-2-2) %
      edge[dashed,labelal={k}] (m-1-2) %
      (m-1-2) edge[labelr={f}] (m-2-2) %
    }
  \end{center}
\end{definition}

Let us explain why this matches the standard definition.  In any such
square, $x$ is essentially the same as just a state in $X$, while $e$
is just an $a$-transition in $Y$.  Furthermore, the composite
$\star  \xto{s^a } [a] \xto{e} Y$ picks the source of $e$, so commutation of
the square says that the source $e \cdot  s^a $ of $e$ is in fact $f(x)$
(a.k.a.\ $f \circ  x$). So we are in the situation described by the solid
part below.
\begin{center}
      \diag{%
      x \& f(x) \\
      x' \& y' %
    }{%
      (m-1-1) edge[mapsto,labela={f}] (m-1-2) %
      edge[dashed,labell={k}] (m-2-1) %
      (m-2-1) edge[dashed,mapsto,labelb={f}] (m-2-2) %
      (m-1-2) edge[labelr={e}] (m-2-2) %
    }
\end{center}
Finding a lifting $k$ then amounts to finding an antecedent to $e$
whose source is $x$, as desired.

We finally recover the analogue of standard bisimulation relations.
\begin{definition}
  A \emph{bisimulation relation} on $X$ is a subobject $R \hookrightarrow  X^2 $ ($=$
  isomorphism class of monomorphisms into $X^2 $) whose projections
  $R \rightarrow  X$ are both functional bisimulations.
\end{definition}
In this case, the above diagram specialises to 
\begin{center}
  \diag{%
    (x_1 ,x_2 ) \& x_i  \\
    (x'_1 ,x'_2 ) \& x'_i \rlap{,} %
  }{%
    (m-1-1) edge[mapsto,labela={\pi _i }] (m-1-2) %
    edge[dashed,labell={(e_1 ,e_2 )}] (m-2-1) %
    (m-2-1) edge[dashed,mapsto,labelb={\pi _i }] (m-2-2) %
    (m-1-2) edge[labelr={e_i }] (m-2-2) %
  }
\end{center}
where $(x_1 ,x_2 ),(x'_1 ,x'_2 ) \in  R(\star )$, and $(e_1 ,e_2 ) \in  R[a]$.

Now, $\psh{\Gamma _{\mathbb A} }$, as a presheaf category, is very well-behaved, namely
it is a Grothendieck topos~\cite{MM}.  In particular, subobjects of
$X^2 $ form a (small) complete lattice, in which the union of a family
$R_i  \hookrightarrow  X^2 $ is computed by first taking the copairing $\sum _i  R_i  \rightarrow  X^2 $,
which is generally not monic, and then taking its image.  Furthermore,
bisimulation relations are closed under unions and so admit a maximum
element,
\emph{bisimilarity}~\cite[Proposition~3.14]{hirschowitz:hal-01815328}.

The presheaf category $\psh{\Gamma _{\mathbb A} }$ is thus only a slight generalisation
of standard labelled transition systems over ${\mathbb A} $, in which we have an
analogue of bisimulation, conveniently defined by lifting, and
bisimilarity.  Let us now consider the case where states are terms in
a certain language, and transitions are defined inductively by a set
of transition rules, i.e., operational semantics.

\section{Positive GSOS specifications as monads}\label{sec:positive}
Let us briefly recall the Positive GSOS format.  We fix a set ${\mathbb A} $ of
labels, and start from a \emph{signature} $\Sigma _0  = (O_0 ,E_0 )$ on ${\mathbf S} {\mathbf e} {\mathbf t} $,
i.e., a set $O_0 $ equipped with a map $E_0 : O_0  \rightarrow  {\mathbb N} $.

\begin{definition}
  A \emph{Positive GSOS rule} over $\Sigma _0 $ consists of
  \begin{itemize}
  \item an operation $f \in  O_0 $, say of arity $n = E_0 (f)$,
  \item a label $a \in  {\mathbb A} $,
  \item $n$ natural numbers $m_1 ,\dots ,m_n $, 
  \item for all $i \in  n$, $m_i $ labels $a_{i,1},\dots ,a_{i,m_i }$, and
  \item a term $t$ with $n + \sum _{i = 1}^n  m_i $ free variables.
  \end{itemize}
\end{definition}
In more standard form, such a rule is just
\begin{mathpar}
  \inferrule{ \dots  \\ x_i  \xto{a_{i,j}} y_{i,j} \\ \dots  \\ (i \in  n, j \in  m_i ) %
  }{ %
    f(x_1 ,\dots ,x_n ) \xto{a} t %
  }
\end{mathpar}
where the $x_i $'s and $y_{i,j}$'s are all distinct and denote the
potential free variables of $t$.

\begin{definition}
  A \emph{Positive GSOS specification} is a signature $\Sigma _0 $, together
  with a set $\Sigma _1 $ of Positive GSOS rules.
\end{definition}

Let us now describe how any Positive GSOS specification $\Sigma $ induces a monad
$T_\Sigma $ on $\psh{\Gamma _{\mathbb A} }$, starting with the action of $T_\Sigma $ on objects.
Given any $X \in  \psh{\Gamma _{\mathbb A} }$, the set $T_\Sigma (X)(\star )$ of states consists of
all $\Sigma _0 $-terms with variables in $X(\star )$, as defined by the
grammar
$$M,N \Coloneqq \llparenthesis u\rrparenthesis  \aalt f(M_1 ,\dots ,M_n ),$$
where $u$ ranges over $X(\star )$.  Similarly, each $T_\Sigma (X)[a]$ consists
of all transition proofs following the rules in $\Sigma _1 $, with axioms in
all $X[a']$'s. Formally, such proofs are constructed inductively from the following rules,
\begin{mathpar}
  \inferrule{ }{\llparenthesis e\rrparenthesis \mathrel{:} _X \llparenthesis e \cdot  s\rrparenthesis  \xto{a} \llparenthesis e \cdot  t\rrparenthesis }~(e \in  X[a])
  \and
  \inferrule{ \dots  \\ R_{i,j}\mathrel{:} _X M_i  \xto{a_{i,j}} M_{i,j} \\ \dots  \\ (i \in  n, j \in  m_i ) %
  }{ %
    \rho (R_{i,j})_{i \in  n, j \in  m_i } \mathrel{:} _X
    f(M_1 ,\dots ,M_n ) \xto{a} t[ (x_i  \mapsto  M_i , (y_{i,j} \mapsto  M_{i,j})_{j \in  m_i })_{i \in  n} ] %
  }
\end{mathpar}
where in the second rule $f \in O_0$ , $E_0 (f) = n$,
$\rho = (f,a,(m_i ,(a_{i,j})_{j \in m_i })_{i \in n},t) \in \Sigma
_1$.
When $m_i  = 0$, we want to keep track of $M_i $ in the transition proof,
so by convention the family $(R_{i,j})_{j \in  m_i }$ denotes just $M_i $.
In the sequel we simply call \emph{transitions} such transition proofs.
\begin{example}\label{ex:ccs}
  Let us consider the following simple CCS transition of depth
  $>1$, in any $T_{CCS}(X)[\tau ]$.
      \begin{mathpar}
      \inferrule*{
        \inferrule*{
          \inferrule*{
          }{
            \llparenthesis e_1 \rrparenthesis :_X \llparenthesis x_1 \rrparenthesis  \xto{\abar} \llparenthesis y_1 \rrparenthesis 
          }
        }{
          \mathit{lpar}(\llparenthesis e_1 \rrparenthesis ,\llparenthesis x_2 \rrparenthesis ):_X \llparenthesis x_1 \rrparenthesis |\llparenthesis x_2 \rrparenthesis  \xto{\abar} \llparenthesis y_1 \rrparenthesis |\llparenthesis x_2 \rrparenthesis 
        }
        \\
        \inferrule*{
        }{
          \llparenthesis e_2 \rrparenthesis :_X \llparenthesis x_3 \rrparenthesis  \xto{a} \llparenthesis y_2 \rrparenthesis 
        }
      }{
        \mathit{sync}(\mathit{lpar}(\llparenthesis e_1 \rrparenthesis ,\llparenthesis x_2 \rrparenthesis ),\llparenthesis e_2 \rrparenthesis ):_X (\llparenthesis x_1 \rrparenthesis |\llparenthesis x_2 \rrparenthesis )|\llparenthesis x_3 \rrparenthesis  \xto{\tau } (\llparenthesis y_1 \rrparenthesis |\llparenthesis x_2 \rrparenthesis )|\llparenthesis y_2 \rrparenthesis 
      }~,
    \end{mathpar}
    where $\mathit{lpar}$ and $\mathit{sync}$ denote the left parallel and
    synchronisation rules, $(e_1 : x_1  \xto{\abar} y_1 ) \in  X[\abar]$,
    $x_2  \in  X(\star )$, and $(e_2 : x_3  \xto{a} y_2 ) \in  X[a]$. 
\end{example}

The source and target of a transition $R \mathrel{:} _X M \xto{a} N$ are
$M$ and $N$, respectively, which ends the definition of $T_\Sigma $ on
objects.  On morphisms $f: X \rightarrow  Y$, $T_\Sigma (f)$ merely amounts to
renaming variables $\llparenthesis x\rrparenthesis $ and $\llparenthesis e\rrparenthesis $ to $\llparenthesis f(x)\rrparenthesis $ and $\llparenthesis f(e)\rrparenthesis $,
respectively. It thus remains to show that $T_\Sigma $ has monad
structure. The unit $\eta _X: X \rightarrow  T_\Sigma (X)$ is obviously given by $\llparenthesis -\rrparenthesis $,
while multiplication $\mu _X: T_\Sigma  (T_\Sigma  (X)) \rightarrow  T_\Sigma (X)$ is given
inductively by removing the outer layer of $\llparenthesis -\rrparenthesis $'s
$$\begin{array}[t]{lrcl}
    \mbox{on states} & \mu _X\llparenthesis M\rrparenthesis  & = & M \\
                     & \mu _X(f(M_1 ,\dots ,M_n )) & = & f(\mu _X(M_1 ),\dots ,\mu _X(M_n )) \\
    \mbox{and on transitions} & \mu _X\llparenthesis R\rrparenthesis  & = & R \\
                     & \mu _X(\rho (R_{i,j})_{i \in  n,i \in  m_i }) & = & \rho (\mu _X(R_{i,j}))_{i \in  n,i \in  m_i }.
  \end{array}$$

  \begin{lemma}\label{lem:monad}
    The natural transformations $\eta $ and $\mu $ equip $T_\Sigma $ with monad
    structure.
  \end{lemma}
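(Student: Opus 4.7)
The plan is a straightforward verification by structural induction, split into four items: (i) for each $X$, the maps $\eta_X$ and $\mu_X$ are indeed morphisms of presheaves, i.e., commute with the source/target actions $s^a$ and $t^a$; (ii) the families $\eta$ and $\mu$ are natural in $X$; (iii) the two unit laws $\mu \circ \eta T_\Sigma = \mathrm{id} = \mu \circ T_\Sigma \eta$; and (iv) associativity $\mu \circ \mu T_\Sigma = \mu \circ T_\Sigma \mu$.

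For (i), $\eta_X$ sends $u \in X(\star)$ to $\llparenthesis u \rrparenthesis$ and $e \in X[a]$ to $\llparenthesis e \rrparenthesis$, and the source/target of $\llparenthesis e \rrparenthesis$ are by the axiom rule defined to be $\llparenthesis e \cdot s^a \rrparenthesis$ and $\llparenthesis e \cdot t^a \rrparenthesis$, which is precisely $\eta_X(e \cdot s^a)$ and $\eta_X(e \cdot t^a)$. For $\mu_X$ one proceeds by induction on transitions: the base clause $\mu_X \llparenthesis R \rrparenthesis = R$ preserves source and target because the source of $\llparenthesis R \rrparenthesis$ in $T_\Sigma T_\Sigma(X)$ is $\llparenthesis M \rrparenthesis$, whose $\mu_X$-image is $M$, the source of $R$; and the inductive clause $\mu_X(\rho(R_{i,j})) = \rho(\mu_X R_{i,j})$ preserves source and target because both sides build them by the same syntactic template over the immediate subterms. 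Naturality in $X$ is a similar induction: $T_\Sigma(f)$ is defined by renaming the leaves $\llparenthesis x \rrparenthesis$ and $\llparenthesis e \rrparenthesis$, so it commutes with $\eta$ trivially and with $\mu$ by propagating through each syntactic constructor.

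The unit laws are essentially immediate. The left unit $\mu_X \circ \eta_{T_\Sigma(X)} = \mathrm{id}$ holds by the base clauses of the definition of $\mu$: applying $\eta_{T_\Sigma(X)}$ wraps an entire $M$ or $R$ in a single layer of $\llparenthesis - \rrparenthesis$, which $\mu_X$ then strips. The right unit $\mu_X \circ T_\Sigma(\eta_X) = \mathrm{id}$ requires a short structural induction: $T_\Sigma(\eta_X)$ replaces each leaf $\llparenthesis u \rrparenthesis$ by $\llparenthesis \llparenthesis u \rrparenthesis \rrparenthesis$ and each axiom $\llparenthesis e \rrparenthesis$ by $\llparenthesis \llparenthesis e \rrparenthesis \rrparenthesis$, and the inductive clauses of $\mu_X$ propagate it down to these leaves, where the two brackets cancel.

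The bulk of the work, and the only step I expect to require genuine care, is associativity. Proceed by induction on an element of $T_\Sigma(T_\Sigma(T_\Sigma(X)))$. On states, both sides peel off the two outer layers of $\llparenthesis - \rrparenthesis$ uniformly, using that $\mu$ commutes with the syntactic constructors $f$. On transitions, the axiom cases $\llparenthesis \llparenthesis R \rrparenthesis \rrparenthesis$ and $\llparenthesis R \rrparenthesis$ are immediate by the base clauses. The rule case $\rho(R_{i,j})$ hinges on the fact that the source and target of the conclusion are built from the sources and targets of the $R_{i,j}$'s (and, for $m_i = 0$, from the carried $M_i$) by the fixed syntactic template associated to $\rho$, and this template is preserved by $\mu$; the inductive hypothesis applied to each $R_{i,j}$ then closes the case. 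Keeping the raw $M_i$'s in the notation rather than substituting them eagerly into $t$ is precisely what makes this a direct induction instead of a separate substitution-is-associative lemma.
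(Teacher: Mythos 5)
Your proposal is correct and matches the paper's proof, which is given only as ``a straightforward induction'': you simply spell out the four standard checks (well-definedness on transitions, naturality, unit laws, associativity) by structural induction on terms and transition proofs, exactly as intended. The only point worth flagging is that the target of a $\rho$-conclusion is defined by substitution into the template $t$, so the inductive step for associativity does implicitly use that $\mu_X$ commutes with substitution of terms for the free variables of $t$ --- but this is itself a one-line induction on $t$ and well within what the paper dismisses as straightforward.
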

  \begin{proof}
    A straightforward induction.
  \end{proof}

  \section{Models as algebras and congruence of bisimilarity}\label{sec:models}
  Algebras for $T_\Sigma $ readily give the right notion of model for
  the transition rules:
  \begin{definition}
    An \emph{algebra} for a monad $T$, or a \emph{$T$-algebra},
    consists of an object $X$, equipped with a morphism
    $\alpha : T(X) \rightarrow  X$ such that the following diagrams commute.
    \begin{center}
      \diag{%
        T (T (X)) \& T(X) \\
        T(X) \& X
      }{%
        (m-1-1) edge[labela={T(\alpha )}] (m-1-2) %
        edge[labell={\mu _X}] (m-2-1) %
        (m-2-1) edge[labelb={\alpha }] (m-2-2) %
        (m-1-2) edge[labelr={\alpha }] (m-2-2) %
      }
      \hfil 
      \diag{%
        X \& \& T(X) \\
        \& X
      }{%
        (m-1-1) edge[labela={\eta _X}] (m-1-3) %
        edge[identity] (m-2-2) %
        (m-1-3) edge[labelbr={\alpha }] (m-2-2)
      }
    \end{center}
  \end{definition}
  Thus, intuitively, a $T_\Sigma $-algebra is a transition system which
  is stable under the given operations and transition rules.

  We now would like to show that, under suitable hypotheses,
  bisimilarity for any given $T_\Sigma $-algebra $\alpha : T_\Sigma (X) \rightarrow  X$ is a
  congruence.  We may state this categorically by saying that the
  canonical morphism $T_\Sigma (\sim _X) \rightarrow  X^2 $ factors through $m: (\sim _X) \hookrightarrow  X^2 $, as
  in
  \begin{equation}
    \diag(1,2){%
      T_\Sigma (\sim _X) \& \& {\sim _X} \\
      T_\Sigma (X^2 ) \& (T_\Sigma  (X))^2  \& X^2 \rlap{.} %
    }{%
      (m-1-1) edge[dashed,labela={}] (m-1-3) %
      edge[labell={T_\Sigma (m_X)}] (m-2-1) %
      (m-2-1) edge[labelb={\langle T_\Sigma (\pi _1 ),T_\Sigma (\pi _2 )\rangle }] (m-2-2) %
      (m-2-2) edge[labelb={\alpha ^2 }] (m-2-3) %
      (m-1-3) edge[labelr={m}] (m-2-3) %
    }
    \label{eq:cong:bisim}
  \end{equation}
  Indeed, an element of $T_\Sigma (\sim _X)$ is a term $M$ whose free variables
  are pairs of bisimilar elements of $X$, which we write as
  $M((x_1 ,y_1 ),\dots ,(x_n ,y_n ))$, with $x_i  \sim _X y_i $ for all $i \in  n$. The
  morphism $\langle T_\Sigma (\pi _1 ),T_\Sigma (\pi _2 )\rangle $ maps this to the pair
  $$(M(x_1 ,\dots ,x_n ),M(y_1 ,\dots ,y_n )),$$
  which $\alpha ^2 $ then evaluates componentwise.  The given factorisation
  thus boils down to
  $$\alpha (M(x_1 ,\dots ,x_n )) \sim _X \alpha (M(y_1 ,\dots ,y_n ))$$
  for all $M$ and $x_1  \sim _X y_1 $,\dots  , $x_n  \sim _X y_n $, i.e., bisimilarity is a
  congruence.

  In order to prove such a property, it is sufficient to prove that
  $T_\Sigma $ preserves \emph{all} bisimulation relations, in the sense that
  if $m: R \hookrightarrow  X^2 $ is a bisimulation relation, then so is
  $$T_\Sigma (R) \xto{T_\Sigma (m)} T_\Sigma (X^2 ) \xto{\langle T_\Sigma (\pi _1 ),T_\Sigma (\pi _2 )\rangle } (T_\Sigma (X))^2  \xto{\alpha ^2 } X^2 $$
  (in the slightly generalised sense that its image is).
  Equivalently, an easy diagram chasing shows that it all boils down to
  $$T_\Sigma (R) \xto{T_\Sigma (m)} T_\Sigma (X^2 ) \xto{T_\Sigma (\pi _i )} T_\Sigma (X) \xto{\alpha } X$$
  being a functional bisimulation for $i \in  \ens{1,2}$.

  Finally, $\pi _i \circ m$ is a functional bisimulation by definition, and
  functional bisimulations are stable under composition, so it is
  sufficient to prove that
  \begin{enumerati}
  \item \label{item:compositionality} the considered algebra is
    \emph{compositional}, in the sense that its structure map
    $\alpha : T_\Sigma (X) \rightarrow  X$ is a functional bisimulation, and
  \item \label{item:preservation} $T_\Sigma $ preserves all functional
    bisimulations.
  \end{enumerati}
  Compositionality essentially means that transitions of any
  $\alpha (M(x_1 ,\dots ,x_n ))$ are all obtained by assembling transitions of the
  $x_i $'s. This is not always the case, even for free algebras:
  \begin{example}
    Consider a specification $\Sigma $ consisting of the unique rule
    \begin{mathpar}
      \inferrule{x \xto{a} y}{f(g(x)) \xto{a} f(g(y))}~,
    \end{mathpar}
    say $\rho $, where $f$ and $g$ are two unary operations.  Then the
    free algebra $\mu _1 : T_\Sigma (T_\Sigma (1)) \rightarrow  T_\Sigma (1)$ is not compositional.
    Indeed, $1$ contains a unique vertex, say $\star $, and a transition
    $b: \star  \xto{b} \star $ for all labels $b$. Thus, $T_\Sigma (1)$ contains a
    transition $\rho \llparenthesis a\rrparenthesis : f(g\llparenthesis \star \rrparenthesis ) \xto{a} f(g\llparenthesis \star \rrparenthesis )$.  But the term
    $f(g\llparenthesis \star \rrparenthesis )$ is the image under $\mu _1 $ of $f\llparenthesis g\llparenthesis \star \rrparenthesis \rrparenthesis $, which has no
    transition.
  \end{example}

  Summing up, we have proved:
  \begin{lemma}\label{lem:abstract}
    If $T_\Sigma $ preserves functional bisimulations, then bisimilarity in
    any compositional $T_\Sigma $-algebra is a congruence.
  \end{lemma}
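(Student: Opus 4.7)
The plan is to assemble the chain of reductions laid out in the preceding paragraphs into a clean argument. Given a compositional $T_\Sigma$-algebra $\alpha : T_\Sigma(X) \to X$, the goal is to exhibit the dashed factorisation in diagram~(\ref{eq:cong:bisim}), i.e., to show that the composite $T_\Sigma(\sim_X) \to X^2$ factors through the inclusion $m : (\sim_X) \hookrightarrow X^2$.

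I would first prove the stronger statement that for any bisimulation relation $m : R \hookrightarrow X^2$, the image of the composite $f_R = \alpha^2 \circ \langle T_\Sigma(\pi_1), T_\Sigma(\pi_2)\rangle \circ T_\Sigma(m) : T_\Sigma(R) \to X^2$ is again a bisimulation relation. By the diagram chase announced just above the statement, this reduces to showing that for $i \in \{1,2\}$, the map $\alpha \circ T_\Sigma(\pi_i) \circ T_\Sigma(m) : T_\Sigma(R) \to X$ is a functional bisimulation. Rewriting this composite as $\alpha \circ T_\Sigma(\pi_i \circ m)$, I would observe that $\pi_i \circ m$ is a functional bisimulation by the very definition of bisimulation relation, hence $T_\Sigma(\pi_i \circ m)$ is one by the preservation hypothesis; post-composition with $\alpha$ then remains a functional bisimulation by compositionality together with closure of functional bisimulations under composition, the latter being a straightforward exercise from the lifting definition.

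Specialising to $R = (\sim_X)$, the image of $f_{\sim_X}$ is thus a bisimulation relation on $X$; but $\sim_X$ is the maximum such relation, so this image is contained in $\sim_X$. This containment supplies exactly the dashed arrow in~(\ref{eq:cong:bisim}), which is the categorical statement of congruence of bisimilarity.

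There is no substantial obstacle in this lemma: it is a bookkeeping consolidation of the analysis that directly precedes it, with the only mildly non-trivial input being the fact that ``the image of $f_R$ is a bisimulation'' is equivalent to each projection composite being a functional bisimulation, which uses standard epi-mono factorisation in the presheaf topos $\psh{\Gamma_{\mathbb A}}$. The genuine work lies in verifying the two hypotheses for $T_\Sigma$ --- compositionality of free algebras and preservation of functional bisimulations by $T_\Sigma$ --- which is the business of the subsequent sections and will require the familiality and cellularity machinery.
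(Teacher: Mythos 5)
Your proposal is correct and follows essentially the same route as the paper's own argument, which consists precisely of the chain of reductions preceding the lemma (preservation of bisimulation relations $\Rightarrow$ each projection composite $\alpha\circ T_\Sigma(\pi_i\circ m)$ is a functional bisimulation $\Rightarrow$ compositionality plus preservation of functional bisimulations). Your explicit use of maximality of $\sim_X$ to obtain the factorisation through $m$ is a detail the paper leaves implicit, but it is exactly the intended step.
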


  \section{Compositionality}\label{sec:compositionality}
  Let us first consider compositionality.  For a general algebra, we
  cannot do more than taking compositionality as a
  hypothesis. However, we can say something when the considered
  algebra is free:
  \begin{lemma}\label{lem:compositionality}
    The multiplication $\mu _X: T_\Sigma  (T_\Sigma (X)) \rightarrow  T_\Sigma (X)$ is a functional
    bisimulation.
  \end{lemma}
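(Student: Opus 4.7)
Unfolding the definition, what must be shown is that given any commuting square of the form
\begin{center}
  \diag{
    \star \& T_\Sigma (T_\Sigma (X)) \\
    {[a]} \& T_\Sigma (X)
  }{
    (m-1-1) edge[labela={\mathcal{M}}] (m-1-2)
    edge[labell={s^a}] (m-2-1)
    (m-2-1) edge[labelb={R}] (m-2-2)
    (m-1-2) edge[labelr={\mu_X}] (m-2-2)
  }
\end{center}
there is a lifting $\mathcal{R} : [a] \to T_\Sigma(T_\Sigma(X))$. Concretely, this is: given $\mathcal{M} \in T_\Sigma(T_\Sigma(X))(\star)$ and a transition $R : M \xto{a} N$ in $T_\Sigma(X)$ whose source satisfies $M = \mu_X(\mathcal{M})$, find a transition $\mathcal{R} : \mathcal{M} \xto{a} \mathcal{N}$ in $T_\Sigma(T_\Sigma(X))$ with $\mu_X(\mathcal{R}) = R$ (the target condition $\mu_X(\mathcal{N}) = N$ will then be automatic).

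The plan is to proceed by induction on the structure of $\mathcal{M}$. In the base case $\mathcal{M} = \llparenthesis P\rrparenthesis$ for some $P \in T_\Sigma(X)(\star)$, the hypothesis gives $P = \mu_X(\llparenthesis P\rrparenthesis) = M$, so $R$ is itself already a transition out of $P$ and we may simply take $\mathcal{R} \mathrel{:=} \llparenthesis R\rrparenthesis$, whose image under $\mu_X$ is $R$ by the very definition of $\mu$ on transitions. In the inductive case $\mathcal{M} = f(\mathcal{M}_1, \dots, \mathcal{M}_n)$, we have $M = f(\mu_X(\mathcal{M}_1), \dots, \mu_X(\mathcal{M}_n))$, which is not of the form $\llparenthesis u\rrparenthesis$, so the derivation of $R$ must be obtained by some rule $\rho = (f,a,(m_i,(a_{i,j})_j)_i,t) \in \Sigma_1$, i.e.\ $R = \rho(R_{i,j})_{i,j}$ with $R_{i,j} : \mu_X(\mathcal{M}_i) \xto{a_{i,j}} M_{i,j}$.

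Applying the induction hypothesis to each $\mathcal{M}_i$ and $R_{i,j}$ yields liftings $\mathcal{R}_{i,j} : \mathcal{M}_i \xto{a_{i,j}} \mathcal{N}_{i,j}$ with $\mu_X(\mathcal{R}_{i,j}) = R_{i,j}$. Reassembling with the same rule $\rho$, set
\[ \mathcal{R} \mathrel{:=} \rho(\mathcal{R}_{i,j})_{i \in n, j \in m_i} : f(\mathcal{M}_1,\dots,\mathcal{M}_n) \xto{a} t[x_i \mapsto \mathcal{M}_i, y_{i,j} \mapsto \mathcal{N}_{i,j}]. \]
Then by the inductive clause in the definition of $\mu_X$ on transitions, $\mu_X(\mathcal{R}) = \rho(\mu_X(\mathcal{R}_{i,j}))_{i,j} = \rho(R_{i,j})_{i,j} = R$, as required.

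The only genuinely non-routine point is to check that the \emph{target} $\mathcal{N}$ of the lifted transition indeed satisfies $\mu_X(\mathcal{N}) = N$. By definition, $N = t[x_i \mapsto \mu_X(\mathcal{M}_i), y_{i,j} \mapsto M_{i,j}]$ and $\mathcal{N} = t[x_i \mapsto \mathcal{M}_i, y_{i,j} \mapsto \mathcal{N}_{i,j}]$, so the equality $\mu_X(\mathcal{N}) = N$ reduces to the fact that $\mu_X$, being part of the monad structure from Lemma~\ref{lem:monad}, commutes with substitution of states into a term template; this is itself a straightforward induction on $t$, using $\mu_X(\mathcal{N}_{i,j}) = M_{i,j}$ coming from the induction hypothesis on the premises. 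This is the step I expect to be fiddliest to make fully rigorous, but it is routine.
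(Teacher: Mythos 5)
Your proof is correct. The core of your argument --- induction on the term $\mathcal{M}$, with the base case lifted by wrapping $R$ in $\llparenthesis -\rrparenthesis $ and the inductive case handled by inverting the derivation of $R$ to expose a rule $\rho $ and reassembling the lifts of the premises --- is exactly the induction the paper performs. The difference is in where that induction is run: the paper first reduces to the case $X = 1$, by invoking the fact (only proved later, in Lemma~\ref{lem:TSigma:familial:monad}) that the naturality squares of $\mu $ are pullbacks, together with pullback-stability of functional bisimulations; it then carries out the induction for $\mu _1 $ alone. You instead run the induction directly for arbitrary $X$, which works just as well and has the advantage of being self-contained --- it removes the forward dependency that the paper explicitly flags as ``a hole left open''. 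What the paper's detour buys is a demonstration of the cartesianness machinery that is one of its themes; what your version buys is a shorter, more elementary argument. Two small remarks: your closing paragraph on the target is redundant, since (as you yourself note at the outset) $\mu _X$ is a morphism of presheaves and therefore commutes with the action of $t^a $, so $\mu _X(\mathcal{N}) = N$ follows immediately from $\mu _X(\mathcal{R}) = R$ without any substitution lemma; and in the case $m_i  = 0$ you should observe that, per the paper's convention, the lifted proof records $\mathcal{M}_i $ in place of $\mu _X(\mathcal{M}_i )$ --- a detail the paper also elides.
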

  \begin{proof}
    We will see below (Lemma~\ref{lem:TSigma:familial:monad}) that all
    naturality squares of $\mu $ are pullbacks. In particular, we have a
    pullback
    \begin{center}
      \Diag{%
        \pbk{m-2-1}{m-1-1}{m-1-2} %
      }{%
        T_\Sigma  (T_\Sigma  (X)) \&  T_\Sigma  (T_\Sigma  (1)) \\
        T_\Sigma  (X) \& T_\Sigma (1)\rlap{.} %
      }{%
        (m-1-1) edge[labela={T_\Sigma  (T_\Sigma (!))}] (m-1-2) %
        edge[labell={\mu _X}] (m-2-1) %
        (m-2-1) edge[labelb={T_\Sigma (!)}] (m-2-2) %
        (m-1-2) edge[labelr={\mu _1 }] (m-2-2) %
      }
    \end{center}
    But functional bisimulations are easily seen to be stable under
    pullback, so it is enough to show that $\mu _1 $ is a functional
    bisimulation.  We thus consider any term ${\mathbf M} $ whose free variables
    are in $T_\Sigma (1)(\star )$, i.e., are themselves terms over a single free
    variable, say $\star $, together with a transition
    $R: \mu _1 ({\mathbf M} ) \xto{a} N$. And we need to show that there exists a
    transition ${\mathbf R} : {\mathbf M}  \xto{a} {\mathbf N} $ whose free variables and axioms are in
    $T_\Sigma (1)$, such that $\mu _{[a]}({\mathbf R} ) = R$.  We proceed by induction on ${\mathbf M} $:
    \begin{itemize}
    \item If ${\mathbf M}  = \llparenthesis M\rrparenthesis $, then taking ${\mathbf R}  = \llparenthesis R\rrparenthesis $ does the job.
    \item Otherwise, ${\mathbf M}  = f({\mathbf M} _1 ,\dots ,{\mathbf M} _n )$, so $M = \mu _1 ({\mathbf M} ) = f(M_1 ,\dots ,M_n )$,
      with $M_i  = \mu _1 ({\mathbf M} _i )$ for all $i \in  n$.  But then, $R$ must have the
      form $\rho (R_{i,j})_{i \in  n, j \in  m_i }$, for a certain rule
      $\rho  = (f,a,(m_i ,(a_{i,j})_{j \in  m_j })_{i \in  n},t)$ of $\Sigma $.
      By induction hypothesis, we find
      for all $i \in  n$ and $j \in  m_i $ a transition
      $${\mathbf R} _{i,j}: {\mathbf M} _i  \xto{a_{i,j}} {\mathbf N} _{i,j},$$
      such that $\mu _1 ({\mathbf R} _{i,j}) = R_{i,j}$.  Thus,
      ${\mathbf R}  = \rho ({\mathbf R} _{i,j})_{i \in  n, j \in  m_i }$ does have ${\mathbf M} $ as its source,
      and furthermore satisfies $\mu _1 ({\mathbf R} ) = R$, as desired. \qedhere
    \end{itemize}
  \end{proof}

  \section{Preserving bisimulations through familiality and
    cellularity}\label{sec:preserving}
  Let us now consider~\ref{item:preservation}, i.e., the fact that
  $T_\Sigma $ preserves functional bisimulations. So we need to find a lifting
  to any commuting square of the form
  \begin{center}
    \diag{%
      \star  \& T_\Sigma (X) \\
      {[a]} \& T_\Sigma (Y)\rlap{,} %
    }{%
      (m-1-1) edge[labela={M}] (m-1-2) %
      edge[labell={s^a }] (m-2-1) %
      (m-2-1) edge[labelb={R}] (m-2-2) %
      (m-1-2) edge[labelr={T_\Sigma (f)}] (m-2-2) %
    }
  \end{center}
  for any functional bisimulation $f$.

  We will proceed in two steps: we will require $T_\Sigma $ to be first
  familial, and then cellular. Familiality will allow us to factor the
  given square as the solid part below left, while cellularity will
  ensure existence of a lifting $k$ as on the right.
  \begin{equation}
    \diag{%
      \star  \& T_\Sigma (A) \& T_\Sigma (X) \\
      {[a]} \& T_\Sigma (B) \& T_\Sigma (Y) %
    }{%
      (m-1-1) edge[labela={M'}] (m-1-2) %
      edge[labell={s^a }] (m-2-1) %
      edge[bend left,labela={M}] (m-1-3) %
      (m-2-1) edge[labelb={R'}] (m-2-2) %
      edge[bend right,labelb={R}] (m-2-3) %
      (m-1-2) edge[labell={T_\Sigma (\gamma )}] (m-2-2) %
      (m-1-2) edge[labela={T_\Sigma (\phi )}] (m-1-3) %
      (m-2-2) edge[labelb={T_\Sigma (\psi )}] (m-2-3) %
      edge[dashed,labelal={T_\Sigma (k)}] (m-1-3) %
      (m-1-3) edge[labelr={T_\Sigma (f)}] (m-2-3) %
    }
    \qquad \qquad
    \diag{%
      A \& X \\
      B \& Y %
    }{%
      (m-1-1) edge[labela={\phi }] (m-1-2) %
      edge[labell={\gamma }] (m-2-1) %
      (m-2-1) edge[labelb={\psi }] (m-2-2) %
      edge[dashed,labelal={k}] (m-1-2) %
      (m-1-2) edge[labelr={f}] (m-2-2) %
    }
    \label{eq:congruence:proof:sketch}
  \end{equation}
  The composite $T_\Sigma (k) \circ  R'$ will thus give the desired lifting for the
  original square.

  At this stage, both steps may seem mysterious to the reader.  In
  fact, as we will see, factorisation as above left follows directly
  from the fact that $T_\Sigma $ may be expressed as a sum of representable
  functors. Let us first explain intuitively why this latter fact
  holds. We will then prove it more rigorously
  in~\partie \ref{sec:familiality}, to eventually return to factorisation
  in~\partie \ref{sec:factorisation}.

  To start with, let us observe that the set $T_\Sigma (1)(\star )$ consists of
  terms over a single free variable, say $\star $.  For any such term $M$,
  we may count the number of occurrences of $\star $, say $n_M$. Thus, any
  term in any $T_\Sigma (X)(\star )$ is entirely determined by an
  $M \in  T_\Sigma (1)(\star )$, together with a map $n_M \rightarrow  X(\star )$ assigning an
  element of $X(\star )$ to each occurrence of $\star $ in $M$.  But maps
  $n_M \rightarrow  X(\star )$ in ${\mathbf S} {\mathbf e} {\mathbf t} $ are in 1-1 correspondence with maps
  $n_M \cdot  {\mathbf y} _\star  \rightarrow  X$ in $\psh{\Gamma _{\mathbb A} }$, where $n_M \cdot  {\mathbf y} _\star $ denotes the
  $n_M$-fold coproduct ${\mathbf y} _\star  + \dots  + {\mathbf y} _\star $ of ${\mathbf y} _\star $ with itself.  In other
  words, letting $E^\star (M) = n_M \cdot  {\mathbf y} _\star $, we have
  \begin{equation}
    T_\Sigma (X)(\star ) \cong  \sum _{M \in  T_\Sigma (1)(\star )} \psh{\Gamma _{\mathbb A} }(E^\star (M),X).\label{eq:TSigma:fam:star}
  \end{equation}
  Clearly, for any $f: X \rightarrow  Y$, the action of $T_\Sigma (f)$ at $\star $
  is given by postcomposing with $f$, i.e., we have
  \begin{center}
    \diag(1,1.5){%
      T_\Sigma (X)(\star ) \& \sum _{M \in  T_\Sigma (1)(\star )} \psh{\Gamma _{\mathbb A} }(E^\star (M),X) \\
      T_\Sigma (Y)(\star ) \& \sum _{M \in  T_\Sigma (1)(\star )} \psh{\Gamma _{\mathbb A} }(E^\star (M),Y). %
    }{%
      (m-1-1) edge[iso,labela={}] (m-1-2) %
      edge[labell={T_\Sigma (f)_\star }] (m-2-1) %
      (m-2-1) edge[iso,labelb={}] (m-2-2) %
      (m-1-2) edge[shorten >=1ex,labelr={\sum _{M \in  T_\Sigma (1)(\star )} \psh{\Gamma _{\mathbb A} }(E^\star (M),f)}] (m-2-2) %
    }
  \end{center}
  The family~\eqref{eq:TSigma:fam:star} of isomorphisms is thus
  natural in $X$.  We will see shortly that this extends to objects
  other than $\star $.  Indeed, any transition in $T_\Sigma (X)[a]$ may be
  decomposed into a transition $R$ in $T_\Sigma (1)[a]$, together with a
  morphism $E^a (R) \rightarrow  X$, where $E^a (R)$ is obtained from occurrences of
  term and transition variables in $R$.

  We have seen that our isomorphisms are natural in $X$, so it seems
  natural to try to express some naturality constraint in the second
  argument of $T_\Sigma $.  But this requires making the right-hand side
  of~\eqref{eq:TSigma:fam:star} functorial in this variable in the
  first place!  In fact, for any transition $R: M \xto{a} N$, we will
  construct morphisms
  $$E^\star (M) \xto{E(s^a \restriction R)} E^a (R) \xot{E(t^a \restriction R)} E^\star (N)$$
  (see Notation~\ref{not:el:mor} below).  Thus, e.g., precomposing by
  the left-hand map yields the desired functorial action 
  $$\sum _{R \in  T_\Sigma (1)[a]} \psh{\Gamma _{\mathbb A} }(E^a (R),X) \rightarrow  \sum _{M \in  T_\Sigma (1)(\star )} \psh{\Gamma _{\mathbb A} }(E^\star (M),X),$$
  of $s^a : \star  \rightarrow  [a]$, sending any $\phi : E^a (R) \rightarrow  X$ to the composite
  \begin{equation}
    E^\star (M) \xto{E(s^a \restriction R)} E^a (R) \xto{\phi } X.\label{eq:action:s}
  \end{equation}

  \section{Familiality for functors}\label{sec:familiality}
  Let us now state more rigorously the definition of familiality and
  the fact that $T_\Sigma $ is familial. In the next section, we will
  explain how this entails the desired
  factorisation~\eqref{eq:congruence:proof:sketch}.
  \begin{definition}\label{def:elts}
    The \emph{category of elements} $el(X)$ of any presheaf $X \in  \psh{{\mathcal C} }$ on
    any category ${\mathcal C} $ has
    \begin{itemize}
    \item as objects all pairs $(c,x)$ with $c \in  {\mathbf o} {\mathbf b} ({\mathcal C} )$ and $x \in  X(c)$,
    \item and as morphisms $(c,x) \rightarrow  (c',x')$ all morphisms $f: c \rightarrow  c'$
      such that $x'\cdot f = x$.
    \end{itemize}
  \end{definition}
  \begin{notation}\label{not:el:mor}
    The morphism $f$, viewed as a morphism $(c,x) \rightarrow  (c',x')$, is entirely
    determined by $f$ and $x'$. We denote it by $f \restriction  x'$.
  \end{notation}

  \begin{definition}\label{def:familial:endofunctor}
    An endofunctor $F: \psh{{\mathbb C} } \rightarrow  \psh{{\mathbb C} }$ on a presheaf category is
    \emph{familial} iff there is a functor $E: el(F(1)) \rightarrow  \psh{{\mathbb C} }$
    such that
    \begin{equation}
      F(X)(c) \cong  \sum _{o \in  F(1)(c)} \psh{{\mathbb C} }(E(c,o),X),\label{eq:familiality:psh}
    \end{equation}
    naturally in $X \in  \psh{{\mathbb C} }$ and $c \in  {\mathbb C} $.
  \end{definition}

  And indeed, we have:
  \begin{lemma}\label{lem:TSigma:familial}
    The endofunctor $T_\Sigma $ is familial.
  \end{lemma}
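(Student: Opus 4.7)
The plan is to exhibit explicitly the family of representables witnessing~\eqref{eq:familiality:psh}, by constructing a functor $E \colon el(T_\Sigma(1)) \to \psh{\Gamma_{\mathbb A}}$ by simultaneous induction on the syntactic structure of terms and transitions, and then verifying the required isomorphism by matching this inductive structure against the inductive description of $T_\Sigma(X)$. On the object side, I set $E(\star, \llparenthesis\star\rrparenthesis) = {\mathbf y}_\star$ and, inductively, $E(\star, f(M_1, \dots, M_n)) = \sum_{i=1}^n E(\star, M_i)$, so that $E(\star, M) = n_M \cdot {\mathbf y}_\star$ as anticipated in the discussion preceding~\eqref{eq:TSigma:fam:star}. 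For transitions I set $E([a], \llparenthesis a\rrparenthesis) = {\mathbf y}_{[a]}$, and for a derived $R = \rho(R_{i,j})_{i \in n, j \in m_i}$ using a rule $\rho = (f, a, (m_i, (a_{i,j})_j)_i, t)$ with hypotheses $R_{i,j} \colon M_i \xto{a_{i,j}} M_{i,j}$, define $E([a], R)$ as the colimit obtained by disjointly adjoining a copy of $E(\star, M_i)$ for each $i$ with $m_i = 0$, and, for each $i$ with $m_i > 0$, gluing the family $(E([a_{i,j}], R_{i,j}))_{j \in m_i}$ along their common ``source subpresheaf'' $E(\star, M_i)$ via the inductively defined source morphisms. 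Simultaneously, $E(s^a \restriction R) \colon E(\star, M) \to E([a], R)$ (with $M$ the source of $R$) is read off as the evident colimit inclusion, and similarly $E(t^a \restriction R) \colon E(\star, N) \to E([a], R)$ for the target.

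With $E$ in hand, I verify~\eqref{eq:familiality:psh} by induction. In the state case, a term $M' \in T_\Sigma(X)(\star)$ decomposes uniquely as a pair $(M, \phi)$, where $M = T_\Sigma(!)(M') \in T_\Sigma(1)(\star)$ is obtained by replacing every variable $\llparenthesis x\rrparenthesis$ by $\llparenthesis\star\rrparenthesis$, and $\phi \colon E(\star, M) \to X$ is the resulting assignment of an $X(\star)$-element to each occurrence of $\star$ in $M$. In the transition case, $R' \in T_\Sigma(X)[a]$ decomposes analogously into a shape $R = T_\Sigma(!)(R') \in T_\Sigma(1)[a]$ together with a consistent filling, assigning an actual edge in $X[b]$ to each axiom-leaf $\llparenthesis b\rrparenthesis$ and an element of $X(\star)$ to each state-occurrence, subject to the sources and targets of the chosen edges matching the corresponding chosen states; this consistency is precisely the identification encoded by the colimit gluing in the transition clause of $E$, so the filling is exactly a morphism $E([a], R) \to X$.

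Naturality in $X$ is immediate, since $T_\Sigma$ acts by variable renaming, which corresponds to postcomposition in the representable component of~\eqref{eq:familiality:psh}. Naturality in $c \in \Gamma_{\mathbb A}$, i.e., compatibility with the source and target maps, reduces to the identity expressing that the source of a transition $(R, \phi) \in T_\Sigma(X)[a]$ corresponds to the composite $E(\star, M) \xto{E(s^a \restriction R)} E([a], R) \xto{\phi} X$, which holds by the very definition of $E$ on morphisms of $el(T_\Sigma(1))$, as already illustrated by~\eqref{eq:action:s}. The hard part will be arranging the colimit in the transition clause of $E$ so that its source and target morphisms have precisely the right shape to make the naturality squares commute on the nose; once that combinatorics is in place, the rest is a straightforward bookkeeping induction on term and transition structure.
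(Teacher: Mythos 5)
Your construction of $E$ coincides with the paper's: the same base cases ($E^\star\llparenthesis \star \rrparenthesis = {\mathbf y} _\star $, $E^a \llparenthesis a\rrparenthesis = {\mathbf y} _{[a]}$), the same wide pushout of the source maps $E(s^{a_{i,j}}\restriction R_{i,j})$ over $E^\star (M_i )$ for each premise group, the degenerate case $m_i = 0$ contributing just $E^\star (M_i )$, the coproduct over $i$, and the source morphism as the induced injection; your verification of~\eqref{eq:familiality:psh} and of naturality is also pitched at the same level of detail as the paper, which itself defers the full check to examples. The one place where you stop short is the target morphism $E(t^a \restriction R)$: you call it ``the evident colimit inclusion'' and defer its combinatorics to ``the hard part'', but it is not an inclusion in general, because the target $N = t[x_i \mapsto  M_i , (y_{i,j} \mapsto  N_{i,j})_{j}]$ arises by substituting into a possibly non-linear term $t$, so occurrences of $\star $ in $N$ are indexed by $\sum _i \bigl((occ_\star (M_i ))^{occ_{x_i }(t)} + \sum _{j\in m_i }(occ_\star (N_{i,j}))^{occ_{y_{i,j}}(t)}\bigr)$, and the component ${\mathbf y} _\star  \rightarrow  E^a (R)$ attached to each such occurrence must be chosen by routing it either through $E^\star (M_i ) \rightarrow  E_i $ or through $E^\star (N_{i,j}) \xto{E(t^{a_{i,j}}\restriction R_{i,j})} E^{a_{i,j}}(R_{i,j}) \rightarrow  E_i $. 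That routing is exactly the content of the second half of the paper's proof, and it is what makes naturality in the $t^a $ direction hold; since you explicitly flag it as remaining work rather than claiming it done, your proposal is essentially the paper's argument with its one genuinely delicate step left open.
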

  \begin{proof}
    We need to do two things: (1) extend the
    isomorphisms~\eqref{eq:TSigma:fam:star} to objects of the form
    $[a]$, and (2) define the morphisms $E(s^a \restriction R)$ and $E(t^a \restriction R)$
    rendering our isomorphisms natural also in the second argument of
    $T_\Sigma $.  In fact, we will do almost everything simultaneously by
    induction: we define $E^a (R)$ and $E(s^a \restriction R): E^\star (R \cdot  s^a ) \rightarrow  E^a (R)$ by
    induction on $R$.  By convention, as we did for the unique element
    $\star  \in  1(\star )$, we denote the unique element of $1[a]$ by $a$ itself.
    \begin{itemize}
    \item If $R = \llparenthesis a\rrparenthesis $, then its source is $M = \llparenthesis \star \rrparenthesis $ and we put
      $E^a (R) = {\mathbf y} _{[a]}$ and $E(s^a \restriction R) = s^a : {\mathbf y} _\star  \rightarrow  {\mathbf y} _{[a]}$.
    \item If $R = \rho (R_{i,j})_{i \in  n, j \in  m_i }:_1  M \xto{a} N$, then for all $i$ and
      $j \in  m_i $, by induction hypothesis, we get morphisms
      $$E(s^{a_{i,j}}\restriction R_{i,j}) : E^\star (M_i ) \rightarrow  E^{a_{i,j}}(R_{i,j}),$$
      where $R_{i,j}:_1  M_i  \xto{a_{i,j}} N_{i,j}$ for all $i \in  n$ and
      $j \in  m_i $.  Let us temporarily fix any $i \in  n$. For all
      $j,j' \in  m_i $, we have $R_{i,j} \cdot  s^a  = R_{i,j'} \cdot  s^a  = M_i $, so we
      take the wide pushout $E_i  = \bigoplus _{E^\star (M_i )} E^{a_{i,j}}(R_{i,j})$,
      i.e., the colimit of the following diagram.
      \begin{equation}
        \diag{%
          \& E^\star (M_i ) \\
          \dots  E^{a_{i,j}}(R_{i,j}) \& \dots  \& E^{a_{i,j'}}(R_{i,j'}) \dots 
        }{%
          (m-1-2) edge[labelal={E(s^{a_{i,j}}\restriction R_{i,j} )}] (m-2-1) %
          (m-1-2) edge[shorten >=1ex,labelar={E(s^{a_{i,j'}}\restriction R_{i,j'})}] (m-2-3) %
        }
        \label{eq:arity:rho}
      \end{equation}
      If $m_i  = 0$, this reduces to just $E^\star (M_i )$, which is exactly
      what we want.  Finally, we let $E^a (R)$ be the coproduct $\sum _i  E_i $
      of all the $E_i $'s, and observe that
      $E^\star (f(M_1 ,\dots ,M_n )) = \sum _i  E^\star (M_i )$ by definition, so that we may
      define $E(s^a \restriction R)$ to be the coproduct $\sum _i  S_i $ of all canonical
      injections $S_i : E^\star (M_i ) \rightarrow  E_i $.
    \end{itemize}

    This ends the inductive definition of $E^a (R)$ and $E(s^a \restriction R)$.  We
    now need to construct the morphisms $E(t^a \restriction R)$. We again proceed
    inductively.  When $R = \llparenthesis a\rrparenthesis $, the desired morphism is clearly $t^a $
    itself.  When $R = \rho (R_{i,j})_{i \in  n, j \in  m_i }$, the target is
    $N = t[x_i  \mapsto  M_i , (y_{i,j} \mapsto  N_{i,j})_{j \in  m_i }]$. Now, by construction, occurrences $occ_\star (N)$ of
    the unique variable $\star $ in $N$ are in 1-1 correspondence with
    $$V_N = \sum _i  \left ( (occ_\star (M_i ))^{occ_{x_i }(t)} + \sum _{j\in m_i } (occ_\star (N_{i,j}))^{occ_{y_{i,j}}(t)} \right ),$$
    and our map $E(t^a \restriction R)$ should reflect the intended correspondences.
    Since $E^\star (N) = V_N \cdot  {\mathbf y} _\star $,  $E(t^a \restriction R)$ is entirely determined
    by choosing a map $E_u : {\mathbf y} _\star  \rightarrow  E^a (R)$ for all $u \in  V_N$:
    \begin{itemize}
    \item If $u$ denotes an occurrence of $\star $ in $M_i $, for some
      occurrence of $x_i $ in $t$, we let $E_u $ denote the
      composite $${\mathbf y} _\star  \rightarrow  E^\star (M_i ) \rightarrow  E^\star (R),$$ where the latter map
      denotes injection into the colimit of~\eqref{eq:arity:rho}.
    \item If $u$ denotes an occurrence of $\star $ in $N_{i,j}$, for some
      occurrence of $y_{i,j}$ in $t$, we let $E_u $ denote the composite
      $${\mathbf y} _\star  \rightarrow  E^\star (N_{i,j}) \xto{E(t^{a_{i,j}}\restriction R_{i,j})}
      E^{a_{i,j}}(R_{i,j}) \rightarrow  E^\star (R),$$ where the latter map again
      denotes injection into the colimit of~\eqref{eq:arity:rho}. \qedhere
    \end{itemize}
  \end{proof}
  Rather than a full formal proof, let us illustrate that our
  construction satisfies the isomorphisms~\eqref{eq:TSigma:fam:star}
  on a few examples.
  \begin{example}
    In the case of the transition of Example~\ref{ex:ccs},
    familiality means that this transition is determined by picking the
    following transition in $T_{CCS}(1){[\tau ]}$,
    \begin{mathpar}
      \inferrule*{
        \inferrule*{
          \inferrule*{
          }{
            \llparenthesis \abar\rrparenthesis :_1  \llparenthesis \star \rrparenthesis  \xto{\abar} \llparenthesis \star \rrparenthesis 
          }
        }{
          \mathit{lpar}(\llparenthesis \abar\rrparenthesis ,\llparenthesis \star \rrparenthesis ):_1 \llparenthesis \star \rrparenthesis |\llparenthesis \star \rrparenthesis  \xto{\abar} \llparenthesis \star \rrparenthesis |\llparenthesis \star \rrparenthesis 
        }
        \\
        \inferrule*{
        }{
          \llparenthesis a\rrparenthesis :_1  \llparenthesis \star \rrparenthesis  \xto{a} \llparenthesis \star \rrparenthesis 
        }
      }{
        \mathit{sync}(\mathit{lpar}(\llparenthesis \abar\rrparenthesis ,\llparenthesis \star \rrparenthesis ),\llparenthesis a\rrparenthesis ):_1  (\llparenthesis \star \rrparenthesis |\llparenthesis \star \rrparenthesis )|\llparenthesis \star \rrparenthesis  \xto{\tau } (\llparenthesis \star \rrparenthesis |\llparenthesis \star \rrparenthesis )|\llparenthesis \star \rrparenthesis 
      }
    \end{mathpar}
    together with a morphism $E^\tau (\mathit{sync}(\mathit{lpar}(\llparenthesis \abar\rrparenthesis ,\llparenthesis \star \rrparenthesis ),\llparenthesis a\rrparenthesis )) \rightarrow  X$.
    Let us start with $E^\star (\mathit{lpar}(\llparenthesis \abar\rrparenthesis ,\llparenthesis \star \rrparenthesis ))$: it is given by the
    colimit of
    \begin{center}
      \diag{%
        {\mathbf y} _\star  \& {\mathbf y} _\star  \\
        {\mathbf y} _{[\abar]} %
      }{%
        (m-1-1) edge[labell={s^{\abar}}] (m-2-1) %
      }
    \end{center}
    (one ${\mathbf y} _\star $ for each argument $x_i $ of $\mathit{lpar}$, and for each $x_i $ one
    ${\mathbf y} _{[a_{i,j}]}$ for each premise $x_i  \xto{a_{i,j}} y_{i,j}$).
    Equivalently, this is just the coproduct ${\mathbf y} _{[\abar]} + {\mathbf y} _\star $, and
    $E(s^{\abar}\restriction \mathit{lpar}(\llparenthesis \abar\rrparenthesis ,\llparenthesis \star \rrparenthesis ))$ and
    $E(t^{\abar}\restriction \mathit{lpar}(\llparenthesis \abar\rrparenthesis ,\llparenthesis \star \rrparenthesis ))$ are given by
    $${{\mathbf y} _\star  + {\mathbf y} _\star } \xto{s^{\abar} + {\mathbf y} _\star } {{\mathbf y} _{[\abar]} + {\mathbf y} _\star } \xot{t^{\abar} + {\mathbf y} _\star } {{\mathbf y} _\star  + {\mathbf y} _\star .}$$
    It is then clear that $E(s^\tau \restriction \mathit{sync}(\mathit{lpar}(\llparenthesis \abar\rrparenthesis ,\llparenthesis \star \rrparenthesis ),\llparenthesis a\rrparenthesis ))$ is given by
    $${{\mathbf y} _\star  + {\mathbf y} _\star  + {\mathbf y} _\star } \xto{s^{\abar} + {\mathbf y} _\star  + s^a } {{\mathbf y} _{[\abar]} + {\mathbf y} _\star  + {\mathbf y} _{[a]}}.$$
    Now how about $E(t^\tau \restriction \mathit{sync}(\mathit{lpar}(\llparenthesis \abar\rrparenthesis ,\llparenthesis \star \rrparenthesis ),\llparenthesis a\rrparenthesis ))$? As the term $t$ occurring in the rule is here
    linear in the $y_{i,j}$'s, an easy computation leads to
    $${{\mathbf y} _{[\abar]} + {\mathbf y} _\star  + {\mathbf y} _{[a]}} \xot{t^{\abar} + {\mathbf y} _\star  + t^a } {{\mathbf y} _\star  + {\mathbf y} _\star  + {\mathbf y} _\star }.$$
    On this example, the isomorphism~\eqref{eq:familiality:psh} thus
    boils down to the transition
    $\mathit{sync}(\mathit{lpar}(\llparenthesis e_1 \rrparenthesis ,\llparenthesis x_2 \rrparenthesis ),\llparenthesis e_2 \rrparenthesis )$ above being entirely determined by
    picking $R = \mathit{sync}(\mathit{lpar}(\llparenthesis \abar\rrparenthesis ,\llparenthesis \star \rrparenthesis ),\llparenthesis a\rrparenthesis ) \in  T_{CCS}(1)[\tau ]$, and
    giving a morphism
    $$\phi : E^\tau (\mathit{sync}(\mathit{lpar}(\llparenthesis \abar\rrparenthesis ,\llparenthesis \star \rrparenthesis ),\llparenthesis a\rrparenthesis )) = {\mathbf y} _{[\abar]} + {\mathbf y} _\star  + {\mathbf y} _{[a]} \longrightarrow  X,$$
    which holds by universal property of coproduct and the Yoneda
    lemma.  Naturality of~\eqref{eq:familiality:psh} in $c$ says that
    the source of $(R,\phi )$ is given up to this correspondence by
    $(\llparenthesis \star \rrparenthesis |\llparenthesis \star \rrparenthesis )|\llparenthesis \star \rrparenthesis $ and the composite
    $$E^\star ((\llparenthesis \star \rrparenthesis |\llparenthesis \star \rrparenthesis )|\llparenthesis \star \rrparenthesis ) = {{\mathbf y} _\star  + {\mathbf y} _\star  + {\mathbf y} _\star } \xto{s^{\abar} + {\mathbf y} _\star  + s^a } {\mathbf y} _{[\abar]} + {\mathbf y} _\star  + {\mathbf y} _{[a]} \longrightarrow  X,$$    
    and likewise for the target.
  \end{example}

  \begin{example}
    Let us now illustrate the treatment of branching, in the sense of
    a rule having several premises involving the same $x_i $. An example
    from CCS is the `replicated synchronisation' rule
    \begin{mathpar}
      \inferrule{x_1  \xto{\abar} y_{1,1} \\ x_1  \xto{a} y_{1,2}}{{!}x_1  \xto{\tau } {!}x_1 |(y_{1,1}|y_{1,2})}~,
    \end{mathpar}
    say $\mathit{rsync}$.
    First, $E^\tau (\mathit{rsync}(\llparenthesis [\abar]\rrparenthesis ,\llparenthesis [a]\rrparenthesis ))$ is simply the pushout
    \begin{center}
      \Diag{%
        \pbk{m-2-1}{m-2-2}{m-1-2} %
      }{%
        {\mathbf y} _\star  \& {{\mathbf y} _{[a]}} \\
        {{\mathbf y} _{[\abar]}} \& E^\tau (\mathit{rsync}(\llparenthesis [\abar]\rrparenthesis ,\llparenthesis [a]\rrparenthesis ))\rlap{,} %
      }{%
        (m-1-1) edge[labela={s^a }] (m-1-2) %
        edge[labell={s^{\abar}}] (m-2-1) %
        (m-2-1) edge[labelb={}] (m-2-2) %
        (m-1-2) edge[labelr={}] (m-2-2) %
      }%
    \end{center}
    which rightly models the fact that a transition
    $\mathit{rsync}(\llparenthesis e_1 \rrparenthesis ,\llparenthesis e_2 \rrparenthesis ) \in  T_{CCS}(X)[\tau ]$ is entirely determined by
    picking $\mathit{rsync}(\llparenthesis [\abar]\rrparenthesis ,\llparenthesis [a]\rrparenthesis ) \in  T_{CCS}(1)[\tau ]$, together with
    elements $e_1 $ and $e_2 $ of $X[\abar]$ and $X[a]$ with a common
    source.
    The morphism $E(s^\tau \restriction \mathit{rsync}(\llparenthesis [\abar]\rrparenthesis ,\llparenthesis [a]\rrparenthesis ))$ is then straightforwardly given by
    the diagonal. The target morphism $E(t^\tau \restriction \mathit{rsync}(\llparenthesis [\abar]\rrparenthesis ,\llparenthesis [a]\rrparenthesis ))$ is a bit
    more complex to compute. Indeed, the target $t = {!}x_1 |(y_{1,1}|y_{1,2})$ has
    three free variables. The first, $x_1 $, should yield a morphism ${\mathbf y} _\star  \rightarrow  E^\tau (\mathit{rsync}(\llparenthesis [\abar]\rrparenthesis ,\llparenthesis [a]\rrparenthesis ))$
    that is determined by the source morphism $E^\star (M_1 ) \rightarrow  E^{\abar}(R_{1,1})$. Here,
    we get $${\mathbf y} _\star  \xto{s^{\abar}} E^{\abar}(\llparenthesis [\abar]\rrparenthesis ) = {\mathbf y} _{[\abar]} \rightarrow  E^\tau (\mathit{rsync}(\llparenthesis [\abar]\rrparenthesis ,\llparenthesis [a]\rrparenthesis )).$$
    On the other hand, $y_{1,1}$ and $y_{1,2}$ should be determined
    by the target morphisms $E^\star (M_{1,1}) \rightarrow  E^{\abar}(R_{1,1})$ and $E^\star (M_{1,2}) \rightarrow  E^a (R_{1,2})$,
    in our case
    \begin{center}
      ${\mathbf y} _\star  \xto{t^{\abar}} E^{\abar}(\llparenthesis [\abar]\rrparenthesis ) = {\mathbf y} _{[\abar]}  \rightarrow  E^\tau (\mathit{rsync}(\llparenthesis [\abar]\rrparenthesis ,\llparenthesis [a]\rrparenthesis ))$
      \hfil and \hfil
      ${\mathbf y} _\star  \xto{t^a } E^a (\llparenthesis a\rrparenthesis ) = {\mathbf y} _a   \rightarrow  E^\tau (\mathit{rsync}(\llparenthesis [\abar]\rrparenthesis ,\llparenthesis [a]\rrparenthesis ))$.
    \end{center}
  \end{example}

  \section{Familiality and factorisation}\label{sec:factorisation}
  Let us now return to our proof
  sketch~\eqref{eq:congruence:proof:sketch}, and explain the
  properties of familiality that allow us to factor the original
  square as indicated.  The crucial observation is that elements of
  the form
  $$(R,id_{E^c (R)}) \in  \sum _{R \in  T_\Sigma (1)(c)} \psh{\Gamma _{\mathbb A} }(E^c (R),E^c (R)) \cong  T_\Sigma (E^c (R))(c)$$
  have the special property that any other element of the form
  $(R,\phi ) \in  T_\Sigma (X)(c)$ may be obtained uniquely as the image of
  $(R,id)$ by the action of
  $$T_\Sigma (\phi )_c: T_\Sigma (E^c (R))(c) \rightarrow  T_\Sigma (X)(c).$$
  Having the same first component $R$ is equivalent to having the same
  image in $T_\Sigma (1)(c)$. So by Yoneda, having two elements of
  $T_\Sigma (X)(c)$ and $T_\Sigma (Y)(c)$ with common first component is the same
  as having a commuting square of the form below left.
  \begin{center}
    \diag{%
      {\mathbf y} _c \& T_\Sigma (X) \\
      T_\Sigma (Y) \& T_\Sigma (1) %
    }{%
      (m-1-1) edge[labela={}] (m-1-2) %
      edge[labell={}] (m-2-1) %
      (m-2-1) edge[labelb={}] (m-2-2) %
      (m-1-2) edge[labelr={}] (m-2-2) %
    }
    \hfil
    \diag{%
      {\mathbf y} _c \& T_\Sigma (X) \\
      T_\Sigma (E^c (R)) \& T_\Sigma (1) %
    }{%
      (m-1-1) edge[labela={p}] (m-1-2) %
      edge[labell={\xi }] (m-2-1) %
      (m-2-1) edge[labelb={T_\Sigma (!)}] (m-2-2) %
      edge[dashed,labelal={T_\Sigma (k)}] (m-1-2) %
      (m-1-2) edge[labelr={T_\Sigma (!)}] (m-2-2) %
    }
  \end{center}
  The special property of $(R,id)$ is thus equivalently that any
  commuting square as above right (solid part) admits a unique
  (dashed) lifting $k$ as shown, making the non-trivial triangle
  commute.  In fact, this holds more generally by replacing ${\mathbf y} _c$ and
  $1$ by arbitrary objects:
  \begin{definition}
    Given a functor $F: {\mathcal C}  \rightarrow  {\mathcal D} $,
    a morphism $\xi : D \rightarrow  F(C)$ is \emph{$F$-generic}, or \emph{generic} for short,
    when any commuting square as the solid part of 
    \begin{center}
      \diag{%
        D \& F(B) \\
        F(C) \& F(A) %
      }{%
        (m-1-1) edge[labela={\chi }] (m-1-2) %
        edge[labell={\xi }] (m-2-1) %
        (m-2-1) edge[labelb={F(k)}] (m-2-2) %
        edge[dashed,labelal={F(l)}] (m-1-2) %
        (m-1-2) edge[labelr={F(h)}] (m-2-2) %
      }
    \end{center}
    admits a unique \emph{strong} lifting $l$ as shown, in the
    sense that $F(l) \circ  \xi  = \chi $ and $h \circ  l = k$.
  \end{definition}

  \begin{lemma}[{\cite[Remark~2.12]{Weber:famfun}}]\label{lem:fam:gen}
    A functor $F: \psh{{\mathbb C} } \rightarrow  \psh{{\mathbb C} }$ is familial iff any morphism
    $Y \rightarrow  F(X)$ factors as
    $$Y \xto{\xi } F(A) \xto{F(\phi )} F(X),$$
    where $\xi $ is $F$-generic. This is called a \emph{generic-free} factorisation.
  \end{lemma}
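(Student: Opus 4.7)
The plan is to establish both directions by leveraging, via the Yoneda lemma, a pointwise reformulation of morphisms $Y \to F(X)$ at each $(c, y) \in el(Y)$. For the forward direction, assume $F$ is familial with arities $E: el(F(1)) \to \psh{{\mathbb C}}$, and fix $g: Y \to F(X)$. Post-composing with $F(!)$ yields $\overline{g}: Y \to F(1)$, which by~\eqref{eq:familiality:psh} at $X = 1$ is precisely a functor $\overline{g}_{*}: el(Y) \to el(F(1))$. I let $A$ denote the colimit of $E \circ \overline{g}_{*}: el(Y) \to \psh{{\mathbb C}}$, so that the cocone injections $\iota_{c,y}: E(c, \overline{g}(y)) \to A$, transported through~\eqref{eq:familiality:psh} at $A$, assemble by naturality into a morphism $\xi: Y \to F(A)$. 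Dually, the pointwise second components $\phi_{c,y}: E(c, \overline{g}(y)) \to X$ of $g$ form a cocone on $E \circ \overline{g}_{*}$ and hence factor uniquely through a morphism $\phi: A \to X$, with $F(\phi) \circ \xi = g$ by construction. Genericity of $\xi$ reduces, at each $(c, y)$, to the tautological genericity of the element $(\overline{g}(y), \mathrm{id}_{E(c, \overline{g}(y))})$ evident from~\eqref{eq:familiality:psh}, while compatibility of the resulting pointwise liftings with morphisms in $el(Y)$ follows from the uniqueness clause, so that they descend to a unique morphism out of the colimit.

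For the reverse direction, assume every morphism to some $F(X)$ admits a generic-free factorisation. For each $(c, o) \in el(F(1))$, Yoneda identifies $o$ with $\widehat{o}: {\mathbf y}_c \to F(1)$, whose generic-free factorisation ${\mathbf y}_c \xto{\xi_o} F(E(c, o)) \xto{F(!)} F(1)$ defines $E(c, o)$ on objects. Functoriality is obtained by applying genericity of $\xi_o$ to the natural square arising from any morphism $f: (c, o) \to (c', o')$ in $el(F(1))$, the uniqueness clause ensuring that $E$ preserves identities and composition. For the isomorphism~\eqref{eq:familiality:psh}, Yoneda identifies $F(X)(c)$ with $\psh{{\mathbb C}}({\mathbf y}_c, F(X))$; any such morphism admits a generic-free factorisation, and comparing this factorisation with that of its composite with $F(!)$, then invoking uniqueness of generic-free factorisations (a clean consequence of uniqueness of strong liftings), shows the middle object may be taken to be $E(c, o)$ for the induced $o \in F(1)(c)$, so the data reduce to a pair $(o, \phi: E(c, o) \to X)$; naturality in $X$ and $c$ follows similarly.

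The main anticipated obstacle lies in the forward direction, in verifying genericity of the constructed $\xi$: given an arbitrary square, one must extract componentwise strong liftings from genericity at each $(c, y)$, verify their compatibility with morphisms in $el(Y)$, and descend them through the universal property of the colimit defining $A$. A secondary subtlety, central to the reverse direction, is establishing that generic-free factorisations are unique up to canonical isomorphism, but this is a direct formal consequence of the uniqueness clause of genericity.
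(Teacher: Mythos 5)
Your proposal is correct and follows essentially the same route as the paper's (much terser) proof sketch: the forward direction realises the arity object $A$ as a colimit of arities indexed by $el(Y)$ and obtains genericity of $\xi$ from pointwise genericity of the elements $(o,\mathrm{id}_{E(c,o)})$ together with stability of generics under colimits in $\psh{{\mathbb C}}\downarrow F$, while the reverse direction defines $E(c,o)$ as the middle object of the generic-free factorisation of $o\colon {\mathbf y}_c\rightarrow F(1)$. Your version simply spells out the details the paper leaves implicit, and the pieces you flag as delicate (descent of the pointwise liftings through the colimit, uniqueness of generic-free factorisations) are exactly the right ones and are handled correctly.
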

  \begin{proof}[Proof sketch]
    $(\Rightarrow )$ Passing from ${\mathbf y} _c$ to any $Y$ goes by observing that generic
    morphisms are stable under colimits in the comma category
    $\psh{{\mathbb C} } \downarrow  F$, remembering that any presheaf $Z$ is a colimit of
    the composite $$el(Z) \xto{{\mathbf p} _Z} {\mathbb C}  \xto{{\mathbf y} } \psh{{\mathbb C} },$$ where ${\mathbf p} _Z$
    denotes the obvious projection functor.

    \noindent $(\Leftarrow )$ Conversely, $E(c,o)$ is given by $A$, for any choice of generic-free factorisation
    \begin{center}
      \hfill
      \diag|baseline=(m-2-1.base)|{%
        \& F(A) \\
        {\mathbf y} _c \& \& F(1)\rlap{.}
      }{%
        (m-2-1) edge[labelal={\xi }] (m-1-2) %
        edge[labelb={o}] (m-2-3) %
        (m-1-2) edge[labelar={F(\phi )}] (m-2-3) %
      }\qedhere
    \end{center}
  \end{proof}

  Lemma~\ref{lem:fam:gen} thus accounts for the factorisation of the
  original square as on the left in~\eqref{eq:congruence:proof:sketch}: $M$ and $R$
  respectively factor as
  \begin{center}
    $\star  \xto{M'} T_\Sigma (A) \xto{T_\Sigma (\phi )} T_\Sigma (X)$ \hfil and \hfil
    $\star  \xto{R'} T_\Sigma (B) \xto{T_\Sigma (\psi )} T_\Sigma (Y)$,
  \end{center}
  with $M'$ and $R'$ generic. But genericness of $M'$ yields the strong lifting $\gamma $ in
  \begin{center}
    \diag{%
      \star   \& {[a]} \& T_\Sigma (B) \\
      T_\Sigma (A) \& T_\Sigma (X) \& T_\Sigma (Y)\rlap{.} %
    }{%
      (m-1-1) edge[labela={s^a }] (m-1-2) %
      edge[labell={M'}] (m-2-1) %
      (m-2-1) edge[labelb={T_\Sigma (\phi )}] (m-2-2) %
      edge[dashed,labelal={T_\Sigma (\gamma )}] (m-1-3) %
      (m-1-2) edge[labela={R'}] (m-1-3) %
      (m-2-2) edge[labelb={T_\Sigma (f)}] (m-2-3) %
      (m-1-3) edge[labelr={T_\Sigma (\psi )}] (m-2-3) %
    }
  \end{center}

  \section{Cellularity}\label{sec:cellularity}
  We have now factored the original square as promised, but for the
  moment we have no guarantee that the `inner' square
  \begin{equation}
    \diag{%
      A \& X \\
      B \& Y %
    }{%
      (m-1-1) edge[labela={\phi }] (m-1-2) %
      edge[labell={\gamma }] (m-2-1) %
      (m-2-1) edge[labelb={\psi }] (m-2-2) %
      (m-1-2) edge[labelr={f}] (m-2-2) %
    }
    \label{eq:inner}
  \end{equation}
  will admit a lifting.  The point of cellularity is precisely this.
  For once, let us start from the abstract viewpoint and explain how
  directly relevant it is in this case.

  The starting point is the observation that our definition of
  bisimulation by lifting is based on a Galois connection.  Indeed,
  for any class ${\mathcal L} $ of morphisms, let ${\mathcal L} ^\boxslash $ denote the class of maps
  $f: X \rightarrow  Y$ such that for any $l: A \rightarrow  B$ in ${\mathcal L} $, any commuting square
  as below left admits a (not necessarily unique) lifting.
  \begin{center}
    \diag{%
      A \& X \\
      B \& Y %
    }{%
      (m-1-1) edge[labela={u}] (m-1-2) %
      edge[labell={{\mathcal L}  \ni  l}] (m-2-1) %
      (m-2-1) edge[labelb={v}] (m-2-2) %
      (m-1-2) edge[labelr={f \in  {\mathcal L} ^\boxslash }] (m-2-2) %
    }
    \hfil
    \diag{%
      X \& A \\
      Y \& B %
    }{%
      (m-1-1) edge[labela={u}] (m-1-2) %
      edge[labell={\wbotleft{{\mathcal R} } \ni  f}] (m-2-1) %
      (m-2-1) edge[labelb={v}] (m-2-2) %
      (m-1-2) edge[labelr={r \in  {\mathcal R} }] (m-2-2) %
    }
  \end{center}
  Conversely, given a class ${\mathcal R} $ of morphisms, let $\wbotleft{{\mathcal R} }$
  denote the class of morphisms $f: X \rightarrow  Y$ such that for any
  $r: A \rightarrow  B$ in ${\mathcal R} $, any commuting square as above right admits a
  lifting.  Clearly, letting ${\mathcal S} $ denote the set of all maps of the
  form $s^a : \star  \rightarrow  [a]$, ${\mathcal S} ^\boxslash $ catches exactly all functional
  bisimulations. But what is $\wbotrightleft{{\mathcal S} }$? In other words,
  which maps will admit a lifting against all functional
  bisimulations? This is very relevant to us, because finding a
  lifting for our inner square~\eqref{eq:inner} is obviously
  equivalent to showing that $\gamma  \in  \wbotrightleft{{\mathcal S} }$! Fortunately, the
  theory of weak factorisation systems gives a precise
  characterisation~\cite[Corollary~2.1.15]{Hovey}, of which we only
  need the following very special cases:
  \begin{lemma}\label{lem:wbotbot}
    Maps in $\wbotrightleft{{\mathcal S} }$ are closed under composition and
    pushout, in the sense that
    \begin{itemize}
    \item for any composable $f,g \in  \wbotrightleft{{\mathcal S} }$,
      $g \circ  f \in  \wbotrightleft{{\mathcal S} }$, and
    \item for any $f: X \rightarrow  Y$ in $\wbotrightleft{{\mathcal S} }$ and $u: X \rightarrow  X'$,
      the pushout $f'$ of $f$ along $u$, as below, is again in
      $\wbotrightleft{{\mathcal S} }$.
      \begin{center}
        \Diag(.6,2){%
          \pbk{m-2-1}{m-2-2}{m-1-2} %
        }{%
          X \& Y \\
          X' \& Y' %
        }{%
          (m-1-1) edge[labela={f \in  \wbotrightleft{{\mathcal S} }}] (m-1-2) %
          edge[labell={u}] (m-2-1) %
          (m-2-1) edge[labelb={f' \in  \wbotrightleft{{\mathcal S} }}] (m-2-2) %
          (m-1-2) edge[labelr={u'}] (m-2-2) %
        }
      \end{center}
    \end{itemize}
  \end{lemma}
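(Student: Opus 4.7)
The plan is to observe that $\wbotrightleft{\mathcal{S}}$ is by definition $\wbotleft{\mathcal{R}}$ for the class $\mathcal{R} = \mathcal{S}^\boxslash$ of functional bisimulations, so both claims are instances of the completely standard fact that any class of the form $\wbotleft{\mathcal{R}}$ is stable under composition and pushout. I would therefore not invoke~\cite{Hovey} at all but simply give the two short diagram chases, since they are elementary and self-contained.

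For composition, suppose $f : X \to Y$ and $g : Y \to Z$ both lie in $\wbotrightleft{\mathcal{S}}$, and let $r : A \to B$ be any functional bisimulation equipped with a commuting square $(u,v) : (g \circ f) \to r$, where $u : X \to A$ and $v : Z \to B$. First I would form the auxiliary square with top edge $u$, bottom edge $v \circ g$, left edge $f$ and right edge $r$; it commutes by hypothesis, so $f \in \wbotrightleft{\mathcal{S}}$ yields a diagonal filler $h : Y \to A$ with $h \circ f = u$ and $r \circ h = v \circ g$. I would then feed $h$ and $v$ into a second square with left edge $g$ and right edge $r$, which commutes by the relation just obtained, and apply $g \in \wbotrightleft{\mathcal{S}}$ to produce $k : Z \to A$ satisfying $k \circ g = h$ and $r \circ k = v$. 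A trivial check shows $k$ is a diagonal for the original square.

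For pushout, suppose $f \in \wbotrightleft{\mathcal{S}}$ and let $f' : X' \to Y'$ be its pushout along $u : X \to X'$, with induced map $u' : Y \to Y'$. Given a square $(a,b) : f' \to r$ with $r$ a functional bisimulation, I would precompose with $u$ and $u'$ to obtain a square $(a \circ u, b \circ u') : f \to r$; it commutes since $b \circ u' \circ f = b \circ f' \circ u = r \circ a \circ u$. Applying the lifting property of $f$, I get $h : Y \to A$ with $h \circ f = a \circ u$ and $r \circ h = b \circ u'$. The pair $(a, h)$ is compatible with the pushout cone (namely $a \circ u = h \circ f$), so the universal property of the pushout supplies a unique $k : Y' \to A$ with $k \circ f' = a$ and $k \circ u' = h$. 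Finally I would verify $r \circ k = b$ by checking the equality after precomposition with $f'$ and with $u'$ and invoking the uniqueness clause of the pushout.

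There is no genuine obstacle here; the only thing to watch is the bookkeeping for the pushout case, where one must be careful to use the universal property twice — once to define $k$ and once to identify $r \circ k$ with $b$. Since the paper uses the lemma only for these two closure properties, stating and proving them directly (as above) seems preferable to citing the general weak factorisation systems result, and keeps the exposition self-contained.
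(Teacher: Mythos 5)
Your two diagram chases are correct: for composition, the lift $h$ of the outer square against $f$ does satisfy $r\circ h = v\circ g$, so it feeds into a second square against $g$ whose lift $k$ restricts correctly along $g\circ f$; for pushout, the cocone $(a,h)$ with $a\circ u = h\circ f$ induces the map $k$ out of $Y'$, and the identity $r\circ k=b$ indeed needs the uniqueness clause of the pushout, checked after precomposition with $f'$ and $u'$ as you say. The paper itself gives no proof of this lemma --- it simply invokes the general closure properties of classes of the form $\wbotleft{{\mathcal R}}$ from the theory of weak factorisation systems (Hovey, Corollary~2.1.15), of which these are the two special cases needed. Your route is therefore the same mathematics, just unfolded: you correctly observe that $\wbotrightleft{{\mathcal S}}=\wbotleft{({\mathcal S}^\boxslash)}$ and that left-lifting classes are always closed under composition and pushout, and you supply the standard elementary arguments. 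What your version buys is self-containedness; what the citation buys is brevity and a pointer to the further closure properties (retracts, transfinite composition, coproducts) of which the paper's cellularity argument implicitly uses one more instance, namely that coproducts are iterated pushouts along $0$. Either presentation is acceptable.
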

  This is useful to us because the map $\gamma $ that we want to show is in
  $\wbotrightleft{{\mathcal S} }$ may be obtained as a finite composite of
  pushouts of maps in ${\mathcal S} $, which allows us to conclude.
  Indeed, $\gamma $ occurs in
  \begin{center}
    \diag{%
      \star  \& {[a]} \\
      T_\Sigma (E^\star (M)) \& T_\Sigma (E^a (R))\rlap{,} %
    }{%
      (m-1-1) edge[labela={s^a }] (m-1-2) %
      edge[labell={M'}] (m-2-1) %
      (m-2-1) edge[labelb={T_\Sigma (\gamma )}] (m-2-2) %
      (m-1-2) edge[labelr={R'}] (m-2-2) %
    }
  \end{center}
  with $M'$ and $R'$ generic.  So $(M',\gamma )$ is the generic-free
  factorisation of $R' \circ  s^a $ as in Lemma~\ref{lem:fam:gen}, hence,
  because generic-free factorisations are unique up to canonical
  isomorphism, we can actually compute $\gamma $. Indeed, letting $M' = (M'',id)$
  and $R' = (R'',id)$, for suitable $M'' \in  T_\Sigma (1)(\star )$ and
  $R'' \in  T_\Sigma (1)[a]$, by~\eqref{eq:action:s} $R' \circ  s^a $
  is the pair $(R'' \cdot  s^a , E(s^a \restriction R''))$, where
  $$E(s^a \restriction R''): E^\star (R'' \cdot  s^a ) \rightarrow  E^a (R'')$$ is obtained by familiality of
  $T_\Sigma $.  We thus get $$(M'',\gamma ) = (R''\cdot s^a , E(s^a \restriction R'')),$$ hence in
  particular $$\gamma  = E(s^a \restriction R'').$$ It is thus sufficient to show that
  each $E(s^a \restriction R)$ is in $\wbotrightleft{{\mathcal S} }$. This goes by induction on $R$,
  following an incremental construction of $E(s^a \restriction R)$. The base case is clear.
  When $R = \rho (R_{i,j})_{i \in  n, j \in  m_i }$, remember from the proof
  of Lemma~\ref{lem:TSigma:familial} that $E^a (R)$ is the coproduct
  $\sum _i  E_i $ for $i \in  n$, 
  each $E_i $ being constructed as the wide pushout of all
  $$E(s^{a_{i,j}}\restriction R_{i,j}): E^\star (M_i ) \rightarrow  E^{a_{i,j}}(R_{i,j}).$$
  Coproducts may be constructed by pushout along $0$, so it suffices
  to show that each diagonal $E^\star (M_i ) \rightarrow  E_i $ is in $\wbotrightleft{{\mathcal S} }$
  if each $E(s^{a_{i,j}}\restriction R_{i,j})$ is. This in turn boils down to
  incrementally constructing the diagonal $E^\star (M_i ) \rightarrow  E_i $ by
  successively pushing out each $E(s^{a_{i,j}}\restriction R_{i,j})$: assuming
  that we have constructed the diagonal $E^\star (M_i ) \rightarrow  E_i ^j $ up until
  $j < m_i $, we can incorporate $R_{i,j+1}$ by composing with the
  bottom morphism of
  \begin{center}
    \Diag(1,2){%
      \pbk[2em]{m-2-1}{m-2-2}{m-1-2}
    }{%
      E^\star (M_i ) \& E^{a_{i,j+1}}(R_{i,j+1}) \\
      E_i ^j  \& E_i ^{j+1}\rlap{,}
    }{%
      (m-1-1) edge[labela={E(s^{a_{i,j+1}}\restriction R_{i,j+1})}] (m-1-2) %
      edge[shorten >= 3pt,labell={}] (m-2-1) %
      (m-2-1) edge[labelb={}] (m-2-2) %
      (m-1-2) edge[shorten >= 5pt,labelr={}] (m-2-2) %
    }%
    \end{center}
    which is indeed in $\wbotrightleft{{\mathcal S} }$ by Lemma~\ref{lem:wbotbot}.
    Clearly, the obtained $E_i ^{m_i }$ is canonically isomorphic to $E_i $,
    so we have shown:
  \begin{lemma}\label{lem:cellular}
    The monad $T_\Sigma $ is \emph{cellular}, in the sense that in any
    commuting square of the form
    \begin{center}
      \diag{%
        \star  \& {[a]} \\
        T_\Sigma (A) \& T_\Sigma (B)\rlap{,} %
      }{%
        (m-1-1) edge[labela={s^a }] (m-1-2) %
        edge[labell={\xi }] (m-2-1) %
        (m-2-1) edge[labelb={T_\Sigma (\gamma )}] (m-2-2) %
        (m-1-2) edge[labelr={\chi }] (m-2-2) %
      }
    \end{center}
    with $\xi $ and $\chi $ generic, we have $\gamma  \in  \wbotrightleft{{\mathcal S} }$.
\end{lemma}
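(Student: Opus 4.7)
The plan is to explicitly compute $\gamma$ using the uniqueness of generic-free factorisations, and then prove the resulting formula by induction on the transition that indexes $\chi$, invoking Lemma~\ref{lem:wbotbot} at each step.

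First, by Lemma~\ref{lem:fam:gen} applied to $Y = \star$ and $Y = [a]$, generic-freeness lets me write $\xi$ as $(M'',\mathit{id}_{E^\star(M'')})$ for some $M'' \in T_\Sigma(1)(\star)$ and $\chi$ as $(R'',\mathit{id}_{E^a(R'')})$ for some $R'' \in T_\Sigma(1)[a]$, with $A = E^\star(M'')$ and $B = E^a(R'')$. Applying the explicit formula~\eqref{eq:action:s} for the action of $s^a$ on the family of representables yields $\chi \circ s^a = (R''\cdot s^a,\, E(s^a\restriction R''))$. But the commuting square displayed in the statement also presents $\chi \circ s^a$ as $T_\Sigma(\gamma) \circ \xi$, which is the generic-free factorisation $(M'', \gamma)$. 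Uniqueness of generic-free factorisations up to canonical isomorphism then forces $M'' = R''\cdot s^a$ and, crucially, $\gamma = E(s^a \restriction R'')$.

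The problem thus reduces to showing $E(s^a \restriction R) \in \wbotrightleft{{\mathcal S}}$ for every $R \in T_\Sigma(1)[a]$, which I would establish by induction on $R$ following the very construction of $E$ in the proof of Lemma~\ref{lem:TSigma:familial}. The base case $R = \llparenthesis a\rrparenthesis$ gives $E(s^a \restriction R) = s^a$, which belongs to ${\mathcal S} \subseteq \wbotrightleft{{\mathcal S}}$ trivially. For the inductive step $R = \rho(R_{i,j})_{i \in n, j \in m_i}$, recall that $E(s^a \restriction R) = \sum_i S_i$, where each $S_i\colon E^\star(M_i) \to E_i$ is the canonical injection into the wide pushout of the maps $E(s^{a_{i,j}} \restriction R_{i,j})$ over $E^\star(M_i)$. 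By induction hypothesis each $E(s^{a_{i,j}} \restriction R_{i,j})$ lies in $\wbotrightleft{{\mathcal S}}$.

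The remaining work, which is the real content, is to show that these two combinatorial operations — wide pushouts, and then coproducts of the resulting diagonals — preserve $\wbotrightleft{{\mathcal S}}$. This is exactly where Lemma~\ref{lem:wbotbot} applies: each $S_i$ arises as an iterated pushout by processing the premises $j = 1,\dots,m_i$ in turn (with the trivial case $m_i = 0$ giving $S_i = \mathit{id}_{E^\star(M_i)} \in \wbotrightleft{{\mathcal S}}$ for free), and the coproduct $\sum_i S_i$ is built by successive pushouts of the $S_i$'s along the initial object. Closure of $\wbotrightleft{{\mathcal S}}$ under composition and pushout then delivers $E(s^a \restriction R) \in \wbotrightleft{{\mathcal S}}$, and hence $\gamma \in \wbotrightleft{{\mathcal S}}$. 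I expect the only mild obstacle to be careful bookkeeping of the iterated pushouts (particularly the degenerate $m_i = 0$ case and the passage from wide pushouts to coproducts over the initial object), but no new ideas beyond Lemmas~\ref{lem:TSigma:familial}, \ref{lem:fam:gen}, and \ref{lem:wbotbot} should be required.
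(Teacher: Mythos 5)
Your proposal is correct and follows essentially the same route as the paper: compute $\gamma = E(s^a\restriction R'')$ via uniqueness of generic-free factorisations and the formula~\eqref{eq:action:s}, then show $E(s^a\restriction R)\in\wbotrightleft{{\mathcal S}}$ by induction on $R$, realising the wide pushouts and the coproduct as iterated pushouts and composites so that Lemma~\ref{lem:wbotbot} applies. No meaningful differences from the paper's argument.
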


  \section{Familiality for monads}\label{sec:monads:familiality}
  We have now almost proved:
  \begin{lemma}\label{lem:preservation}
    $T_\Sigma $ preserves functional bisimulations.
  \end{lemma}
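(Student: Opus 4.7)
The plan is to assemble the machinery built up in sections~\ref{sec:familiality}--\ref{sec:cellularity} according to the sketch given in~\eqref{eq:congruence:proof:sketch}. Starting from a functional bisimulation $f: X \rightarrow Y$ and a commuting square
\begin{center}
\diag{
  \star \& T_\Sigma (X) \\
  {[a]} \& T_\Sigma (Y)\rlap{,}
}{
  (m-1-1) edge[labela={M}] (m-1-2)
  edge[labell={s^a }] (m-2-1)
  (m-2-1) edge[labelb={R}] (m-2-2)
  (m-1-2) edge[labelr={T_\Sigma (f)}] (m-2-2)
}
\end{center}
I would first apply Lemma~\ref{lem:TSigma:familial} together with Lemma~\ref{lem:fam:gen} to obtain generic-free factorisations $M = T_\Sigma(\phi) \circ M'$ and $R = T_\Sigma(\psi) \circ R'$, where $M': \star \rightarrow T_\Sigma(A)$ and $R': [a] \rightarrow T_\Sigma(B)$ are generic.

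Next, I would produce the vertical arrow $\gamma: A \rightarrow B$ on the ``inner'' square. The composite $R' \circ s^a: \star \rightarrow T_\Sigma(B)$ has its own generic-free factorisation, obtained from Lemma~\ref{lem:TSigma:familial}, but since $M'$ is already generic and the outer square commutes after postcomposition with $T_\Sigma(f)$, uniqueness of generic-free factorisations (applied after pushing both sides through $T_\Sigma(f)$ and using the genericity of $M'$) yields a unique $\gamma: A \rightarrow B$ with $T_\Sigma(\gamma) \circ M' = R' \circ s^a$ and $\psi \circ \gamma = f \circ \phi$. This is essentially the strong lifting $\gamma$ displayed at the end of \partie\ref{sec:factorisation}, obtained by applying the genericity of $M'$ to the square formed by $s^a$, $M'$, $R'$, $T_\Sigma(\phi)$ and $T_\Sigma(f)$.

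Now I would invoke Lemma~\ref{lem:cellular}: because $M'$ and $R'$ are generic and fit in the cellularity square, we conclude $\gamma \in \wbotrightleft{{\mathcal S}}$. Since $f: X \rightarrow Y$ is a functional bisimulation, it belongs to ${\mathcal S}^\boxslash$, and by the very definition of $\wbotrightleft{{\mathcal S}}$ any commuting square with left-hand side $\gamma$ and right-hand side $f$ admits a lifting; applied to the inner square~\eqref{eq:inner} this yields a morphism $k: B \rightarrow X$ with $k \circ \gamma = \phi$ and $f \circ k = \psi$.

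Finally, I would take $T_\Sigma(k) \circ R': [a] \rightarrow T_\Sigma(X)$ as the lifting of the original square. Commutation with the bottom edge follows from $T_\Sigma(f) \circ T_\Sigma(k) \circ R' = T_\Sigma(\psi) \circ R' = R$, and commutation with the top edge from $T_\Sigma(k) \circ R' \circ s^a = T_\Sigma(k) \circ T_\Sigma(\gamma) \circ M' = T_\Sigma(\phi) \circ M' = M$, using functoriality of $T_\Sigma$ and the compatibilities coming from the construction of $\gamma$ and $k$. The main delicate point is the second step: extracting $\gamma$ so that it really is the map to which Lemma~\ref{lem:cellular} applies. This is where one must appeal carefully to the uniqueness property of generic morphisms from Lemma~\ref{lem:fam:gen}, rather than to familiality alone; everything else is diagram chasing in the wide bicategorical sense, once the factorisation and cellularity lemmas have done the heavy lifting.
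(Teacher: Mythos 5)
Your proof is correct and follows essentially the same route as the paper: generic--free factorisation of $M$ and $R$ via Lemma~\ref{lem:fam:gen}, extraction of $\gamma$ as the strong lifting given by genericness of $M'$, membership $\gamma \in \wbotrightleft{{\mathcal S}}$ via Lemma~\ref{lem:cellular}, lifting of the inner square~\eqref{eq:inner} against the functional bisimulation $f$, and the composite $T_\Sigma(k) \circ R'$ as the final lifting, exactly as sketched in~\eqref{eq:congruence:proof:sketch}. The only cosmetic difference is that you phrase the extraction of $\gamma$ partly in terms of uniqueness of generic--free factorisations before reducing it, as the paper does, to a single application of the strong-lifting property of the generic $M'$.
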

  The only remaining bit is the hole we left in the proof of
  Lemma~\ref{lem:compositionality}, when we claimed that all
  naturality squares of $\mu $ were pullbacks. Let us prove this now, as
  part of the following upgrade of
  Definition~\ref{def:familial:endofunctor} and
  Lemma~\ref{lem:TSigma:familial}.
  \begin{definition}
    A monad is \emph{familial} when its underlying functor is, and its
    unit and multiplication are \emph{cartesian} natural
    transformations, i.e., their naturality squares are pullbacks.
  \end{definition}
  In a case like ours, where the underlying category has a terminal
  object, by the pullback lemma, it is sufficient to verify that
  squares of the following form are pullbacks.
  \begin{center}
    \Diag{%
      \pbk{m-2-1}{m-1-1}{m-1-2} %
    }{%
      T_\Sigma ^2 (X) \& T_\Sigma ^2 (1) \\
      T_\Sigma (X) \& T_\Sigma (1)
    }{%
      (m-1-1) edge[labela={T_\Sigma ^2 (!)}] (m-1-2) %
      edge[labell={\mu _X}] (m-2-1) %
      (m-2-1) edge[labelb={T_\Sigma (!)}] (m-2-2) %
      (m-1-2) edge[labelr={\mu _1 }] (m-2-2) %
    }
    \hfil 
    \Diag{%
      \pbk{m-2-1}{m-1-1}{m-1-2} %
    }{%
      X \& 1 \\
      T_\Sigma (X) \& T_\Sigma (1)
    }{%
      (m-1-1) edge[labela={T_\Sigma ^2 (!)}] (m-1-2) %
      edge[labell={\eta _X}] (m-2-1) %
      (m-2-1) edge[labelb={T_\Sigma (!)}] (m-2-2) %
      (m-1-2) edge[labelr={\eta _1 }] (m-2-2) %
    }
  \end{center}
  The following will conclude our proof of congruence of bisimilarity:
  \begin{lemma}\label{lem:TSigma:familial:monad}
    $T_\Sigma $ is a familial monad.
  \end{lemma}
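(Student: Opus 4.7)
The plan is to combine Lemma~\ref{lem:TSigma:familial}, which already establishes that the underlying functor of $T_\Sigma$ is familial, with a direct verification that $\eta$ and $\mu$ are cartesian natural transformations. Thanks to the reduction recalled just before the statement, it is enough to check that the two displayed squares involving $!: X \to 1$ are pullbacks.

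The unit square is immediate from the construction of $T_\Sigma$. An element of the candidate pullback at $c \in \Gamma_{\mathbb{A}}$ is a pair $(\ast, M) \in 1(c) \times T_\Sigma(X)(c)$ with $T_\Sigma(!)(M) = \eta_1(\ast)$. Since $\eta_1(\ast)$ is the variable term $\llparenthesis\ast\rrparenthesis$ (respectively, the axiom transition $\llparenthesis a\rrparenthesis$ when $c = [a]$), the term $M$ must itself be of the form $\llparenthesis y\rrparenthesis$ for a unique $y \in X(c)$: variable/axiom constructors are syntactically distinguished from operation- and rule-headed ones, so no other shape is compatible with the constraint.

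For the multiplication square, I would proceed by structural induction on $\mathbf{m} \in T_\Sigma^2(1)(c)$. Given $(\mathbf{m}, M)$ such that $\mu_1(\mathbf{m}) = T_\Sigma(!)(M)$, the outer constructor of $\mathbf{m}$ determines the outer shape of both $M$ and of the sought preimage $\mathbf{M} \in T_\Sigma^2(X)(c)$. If $\mathbf{m} = \llparenthesis N\rrparenthesis$, then $\mu_1(\mathbf{m}) = N$ and $\mathbf{M} = \llparenthesis M\rrparenthesis$ is the unique solution. Otherwise $\mathbf{m} = g(\mathbf{n}_1,\dots,\mathbf{n}_k)$ for some operation $g$, forcing $M = g(M_1,\dots,M_k)$ with $T_\Sigma(!)(M_i) = \mu_1(\mathbf{n}_i)$ for each $i$; the induction hypothesis then supplies unique $\mathbf{M}_i \in T_\Sigma^2(X)(c)$ satisfying both projection conditions, after which $\mathbf{M} := g(\mathbf{M}_1,\dots,\mathbf{M}_k)$ is the unique solution. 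The transition case ($c = [a]$) is identical, the two outer cases being $\mathbf{m} = \llparenthesis R\rrparenthesis$, which forces $\mathbf{M} = \llparenthesis M\rrparenthesis$, and $\mathbf{m} = \rho(\mathbf{R}_{i,j})_{i \in n, j \in m_i}$ for some Positive GSOS rule $\rho$, which reduces uniquely to the sub-transitions by induction.

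The verification is essentially a syntactic bookkeeping exercise and presents no genuine obstacle; the only mildly delicate point is carrying the induction uniformly over states and transitions, but this is unproblematic because the inductive clauses defining $T_\Sigma$, $\mu$, and $\eta$ on the sorts $\star$ and $[a]$ are perfectly parallel. Once these pullback squares are in hand, familiality of $T_\Sigma$ as a monad follows by combining them with the functorial familiality already proved.
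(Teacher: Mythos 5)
Your proposal is correct and follows essentially the same route as the paper: reduce to the two pullback squares over $!: X \to 1$, check them pointwise in ${\mathbf S}{\mathbf e}{\mathbf t}$, and establish the multiplication square by structural induction on the element of $T_\Sigma^2(1)(c)$, with the outer constructor forcing the shape of both the given element and its unique preimage. The only point the paper makes slightly more explicit is that in the transition case a rule's nullary premises ($m_i = 0$) are terms rather than transitions, so the induction leans on the already-established state square — which is exactly the ``parallel sorts'' remark you make at the end.
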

  \begin{proof}
    Pullbacks in presheaf categories being pointwise, we just need to
    check that a few types of squares are pullbacks in
    ${\mathbf S} {\mathbf e} {\mathbf t} $: for $\mu $ and $\eta $, and for each type of label.
    Let us treat the most interesting one, namely the left one below,
    assuming that the right one has already been covered.
    \begin{mathpar}
      \diag(.6,1){%
        T^2 (X)[a] \& T^2 (1)[a] \\
        T(X)[a] \& T(1)[a]
      }{%
        (m-1-1) edge[labela={T^2 (!)_{[a]}}] (m-1-2) %
        edge[labell={\mu _{X,{[a]}}}] (m-2-1) %
        (m-2-1) edge[labelb={T(!)_{[a]}}] (m-2-2) %
        (m-1-2) edge[labelr={\mu _{1,{[a]}}}] (m-2-2) %
      }
      \and 
      \diag(.6,1){%
        T^2 (X)(\star ) \& T^2 (1)(\star ) \\
        T(X)(\star ) \& T(1)(\star )
      }{%
        (m-1-1) edge[labela={T^2 (!)_\star }] (m-1-2) %
        edge[labell={\mu _{X,\star }}] (m-2-1) %
        (m-2-1) edge[labelb={T(!)_\star }] (m-2-2) %
        (m-1-2) edge[labelr={\mu _{1,\star }}] (m-2-2) %
      }
      % \and 
      % \diag(.6,1){%
      %   X(\star ) \& 1 \\
      %   T(X)(\star ) \& T(1)(\star )
      % }{%
      %   (m-1-1) edge[labela={T^2 (!)_\star }] (m-1-2) %
      %   edge[labell={\eta _{X,\star }}] (m-2-1) %
      %   (m-2-1) edge[labelb={T(!)_\star }] (m-2-2) %
      %   (m-1-2) edge[labelr={\eta _{1,\star }}] (m-2-2) %
      % } \\
      % \diag(.6,1){%
      %   X[a] \& 1 \\
      %   T(X)[a] \& T(1)[a]
      % }{%
      %   (m-1-1) edge[labela={T^2 (!)_{[a]}}] (m-1-2) %
      %   edge[labell={\eta _{X,{[a]}}}] (m-2-1) %
      %   (m-2-1) edge[labelb={T(!)_{[a]}}] (m-2-2) %
      %   (m-1-2) edge[labelr={\eta _{1,{[a]}}}] (m-2-2) %
      % } 
    \end{mathpar}
    Let $R[!]$ denote $T(!)(R)$ and ${\mathbf R} \llbracket !\rrbracket $ denote $T^2 (!)({\mathbf R} )$, for all
    $R \in  T(X)[a]$ and ${\mathbf R}  \in  T^2 (X)[a]$.  We must show that for any
    $a \in  {\mathbb A} $, given any ${\mathbf R}  \in  T^2 (1)[a]$ and $R \in  T(X)[a]$ such that
    $R[!] = \mu _1 ({\mathbf R} ),$ there exists a unique ${\mathbf R} ^0  \in  T^2 (X)[a]$ satisfying
    \begin{center}
      $\mu _X({\mathbf R} ^0 ) = R$ \hfil and \hfil ${\mathbf R} ^0 \llbracket !\rrbracket  = {\mathbf R} $.
    \end{center}
    
    We proceed by induction on ${\mathbf R} $. The base case is easy. For the induction step,
    % \begin{itemize}
    % \item If ${\mathbf R}  = \llparenthesis S\rrparenthesis $, for some $S \in  T(1)[a]$, then $\mu _1 ({\mathbf R} ) = S = R[!]$.
    %   We thus let ${\mathbf R} ^0  = \llparenthesis R\rrparenthesis $, so that
    %   \begin{center}
    %     $\mu _X({\mathbf R} ^0 ) = \mu _X\llparenthesis R\rrparenthesis  = R$ \hfil and \hfil ${\mathbf R} ^0 \llbracket !\rrbracket  = \llparenthesis R\rrparenthesis \llbracket !\rrbracket  = \llparenthesis R[!]\rrparenthesis  = \llparenthesis S\rrparenthesis  = {\mathbf R} $,
    %   \end{center}
    %   as desired.
    % \item If
    if
      ${\mathbf R}  = \rho ({\mathbf R} _{i,j})_{i \in  n, j \in  m_i }$, then because
      $\mu _1 ({\mathbf R} ) = R[!]$, we have $R = \rho (R_{i,j})_{i \in  n, j \in  m_i }$ with
      $R_{i,j}[!] = \mu _1 ({\mathbf R} _{i,j})$ for all $i,j$\footnote{For all $i$ such
        that $m_i  = 0$, we in fact deal with some term $M_i $, using the
        corresponding square. Let us ignore this detail for readability.}.
      By induction hypothesis, we find
      a family ${\mathbf R} ^0 _{i,j} \in  T^2 (X)[a_{i,j}]$ such that
      \begin{center}
        $\mu _X({\mathbf R} ^0 _{i,j}) = R_{i,j}$ \hfil and \hfil ${\mathbf R} ^0 _{i,j}\llbracket !\rrbracket  = {\mathbf R} _{i,j}$
        \hfil for all $i,j$.
      \end{center}
      Letting now ${\mathbf R} ^0  = \rho ({\mathbf R} ^0 _{i,j})_{i \in  n, j \in  m_i }$, we get as
      desired
      \begin{center}
        $\mu _X({\mathbf R} ^0 ) = \rho (\mu _X({\mathbf R} ^0 _{i,j}))_{i,j} = \rho (R_{i,j})_{i,j} = R$ \hfil and \hfil
        ${\mathbf R} ^0 \llbracket !\rrbracket  = \rho ({\mathbf R} ^0 _{i,j}\llbracket !\rrbracket )_{i,j} = \rho ({\mathbf R} _{i,j})_{i,j} = {\mathbf R} $.
 \qedhere    
      \end{center}
    % \end{itemize}
  \end{proof}

  This ends the proof of:
  \begin{theorem}\label{thm}
    For all $X \in  \psh{\Gamma _{\mathbb A} }$ and Positive GSOS specifications $\Sigma $,
    bisimilarity in the free algebra $T_\Sigma (X)$ is a congruence.
  \end{theorem}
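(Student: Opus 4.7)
The plan is to assemble the three main structural lemmas already developed in the preceding sections. The backbone is Lemma~\ref{lem:abstract}: bisimilarity is a congruence in any compositional $T_\Sigma$-algebra, provided $T_\Sigma$ preserves functional bisimulations. So I would specialise this abstract fact to the free algebra $\mu_X: T_\Sigma(T_\Sigma(X)) \to T_\Sigma(X)$ on the given $X$, and the theorem will follow immediately once both hypotheses are discharged.

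Compositionality of the free algebra is exactly Lemma~\ref{lem:compositionality}: the multiplication $\mu_X$ is a functional bisimulation. Tracing the dependencies, this rests on the naturality squares of $\mu$ being pullbacks --- the familial-monad property of Lemma~\ref{lem:TSigma:familial:monad} --- combined with stability of functional bisimulations under pullback (which reduces the problem to the terminal case $T_\Sigma(1)$) and an induction on transition proofs to conclude.

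Preservation of functional bisimulations by $T_\Sigma$ is Lemma~\ref{lem:preservation}. Its argument, developed in Sections~\ref{sec:preserving}--\ref{sec:cellularity}, factors any candidate lifting problem against $T_\Sigma(f)$ through a generic--free factorisation (familiality, Lemmas~\ref{lem:TSigma:familial} and~\ref{lem:fam:gen}), and then solves the residual inner lifting problem via cellularity (Lemma~\ref{lem:cellular}), using closure of $\wbotrightleft{{\mathcal S}}$ under pushout and composition (Lemma~\ref{lem:wbotbot}).

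Feeding these two ingredients into Lemma~\ref{lem:abstract} yields the result as a one-line assembly. The genuine obstacle, which has already been addressed upstream by Lemmas~\ref{lem:TSigma:familial:monad} and~\ref{lem:cellular}, was engineering the wide-pushout construction of each arity $E^a(R)$ so that the resulting map $E(s^a \restriction R)$ is simultaneously generic-free and lies in $\wbotrightleft{{\mathcal S}}$; this is precisely what allows both reductions --- compositionality and preservation of bisimulations --- to go through uniformly in the presence of branching premises. Once this combinatorial core is in place, the theorem itself requires no additional argument.
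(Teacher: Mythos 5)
Your proposal is correct and follows exactly the paper's route: the theorem is obtained by instantiating Lemma~\ref{lem:abstract} at the free algebra $\mu_X\colon T_\Sigma(T_\Sigma(X))\to T_\Sigma(X)$, with compositionality supplied by Lemma~\ref{lem:compositionality} and preservation of functional bisimulations by Lemma~\ref{lem:preservation} (itself resting on familiality and cellularity as you describe). Your tracing of the upstream dependencies, including the role of Lemma~\ref{lem:TSigma:familial:monad} in closing the gap left in the proof of Lemma~\ref{lem:compositionality}, matches the paper's own assembly.
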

  
  \section{Conclusion and perspectives}\label{sec:conclu}
  In this paper, we have introduced the familial approach to
  programming language theory~\cite{hirschowitz:hal-01815328} at the
  rather concrete level of generalised labelled transition systems
  ($\psh{\Gamma _{\mathbb A} }$). Notably, we have recalled the notions of cellular
  monad and compositional algebra, and recalled that bisimilarity is
  always a congruence in a compositional algebra for a cellular monad
  (Lemma~\ref{lem:abstract}).

  We have also shown that all monads $T_\Sigma $ generated from a Positive
  GSOS specification $\Sigma $ are cellular (Lemma~\ref{lem:cellular}) and
  that free $T_\Sigma $-algebras are always compositional
  (Lemma~\ref{lem:compositionality}).  Putting all three results
  together, we readily recover (Theorem~\ref{thm}) the known result
  that bisimilarity is a congruence for all free $T_\Sigma $-algebras. In
  particular, this is the case for the standard, syntactic transition
  system, which is the initial algebra $T_\Sigma (0)$.

  This result constitutes a first generic tool for constructing
  instances of the framework of~\cite{hirschowitz:hal-01815328}.
  However, its scope is rather limited, and we plan to refine the
  construction to cover other formats like
  \emph{tyft/tyxt}~\cite{tyft}. A striking and promising observation
  here is that the well-foundedness condition demanded of a tyft/tyxt
  specification for bisimilarity to be a congruence is clearly covered
  by our approach based on weak factorisation systems
  (see~\partie \ref{sec:cellularity}). Cellularity thus provides a semantic
  criterion for well-foundedness, whose precise relationship with the
  original, syntactic one seems worth investigating.

  Beyond the task of showing by hand that existing formats yield
  cellular monads whose free algebras are compositional, we also plan
  to investigate a more categorical understanding of the generating
  process. The main motivation here is to design a construction that
  would cover variable binding. The theory developed by Fiore and his
  colleagues~\cite{DBLP:conf/lics/Fiore08,FioreHurEquational} seems
  like a good starting point.

  \bibliographystyle{./eptcs}
  \bibliography{./b}

\ifx\french\undefined \def\biling#1#2{#1} \else \def\biling#1#2{#2}
  \fi\ifx\french\undefined \ifx\spanish\undefined \def\triling#1#2#3{#1} \else
  \def\triling#1#2#3{#3} \fi\else \def\triling#1#2#3{#2}
  \fi\ifx\abbrevbib\undefined \def\abbrev#1#2{#1} \else \def\abbrev#1#2{#2} \fi
\begin{thebibliography}{10}
\providecommand{\bibitemdeclare}[2]{}
\providecommand{\surnamestart}{}
\providecommand{\surnameend}{}
\providecommand{\urlprefix}{Available at }
\providecommand{\url}[1]{\texttt{#1}}
\providecommand{\href}[2]{\texttt{#2}}
\providecommand{\urlalt}[2]{\href{#1}{#2}}
\providecommand{\doi}[1]{doi:\urlalt{http://dx.doi.org/#1}{#1}}
\providecommand{\bibinfo}[2]{#2}

\bibitemdeclare{article}{GSOS}
\bibitem{GSOS}
\bibinfo{author}{B.~\surnamestart Bloom\surnameend},
  \bibinfo{author}{S.~\surnamestart Istrail\surnameend} \&
  \bibinfo{author}{A.~\surnamestart Meyer\surnameend} (\bibinfo{year}{1995}):
  \emph{\bibinfo{title}{Bisimulation can't be traced}}.
\newblock {\sl \bibinfo{journal}{\abbrev{Journal of the ACM}{J. ACM}}}
  \bibinfo{volume}{42}, pp. \bibinfo{pages}{232--268},
  \doi{10.1145/200836.200876}.

\bibitemdeclare{inproceedings}{DBLP:conf/csl/BonchiPPR14}
\bibitem{DBLP:conf/csl/BonchiPPR14}
\bibinfo{author}{Filippo \surnamestart Bonchi\surnameend},
  \bibinfo{author}{Daniela \surnamestart Petrisan\surnameend},
  \bibinfo{author}{Damien \surnamestart Pous\surnameend} \&
  \bibinfo{author}{Jurriaan \surnamestart Rot\surnameend}
  (\bibinfo{year}{2014}): \emph{\bibinfo{title}{Coinduction up-to in a
  fibrational setting}}.
\newblock In: {\sl \bibinfo{booktitle}{Proc.\ 29th \abbrev{Symposium on Logic
  in Computer Science}{Logic in Comp. Sci.}}}, \bibinfo{publisher}{ACM}, pp.
  \bibinfo{pages}{20:1--20:9}, \doi{10.1145/2603088.2603149}.

\bibitemdeclare{article}{DBLP:journals/mscs/CarboniJ95}
\bibitem{DBLP:journals/mscs/CarboniJ95}
\bibinfo{author}{Aurelio \surnamestart Carboni\surnameend} \&
  \bibinfo{author}{Peter \surnamestart Johnstone\surnameend}
  (\bibinfo{year}{1995}): \emph{\bibinfo{title}{Connected Limits, Familial
  Representability and Artin Glueing}}.
\newblock {\sl \bibinfo{journal}{\abbrev{Mathematical Structures in Computer
  Science}{MSCS}}} \bibinfo{volume}{5}(\bibinfo{number}{4}), pp.
  \bibinfo{pages}{441--459}, \doi{10.1017/S0960129500001183}.

\bibitemdeclare{article}{DBLP:journals/tcs/CorradiniHM02}
\bibitem{DBLP:journals/tcs/CorradiniHM02}
\bibinfo{author}{Andrea \surnamestart Corradini\surnameend},
  \bibinfo{author}{Reiko \surnamestart Heckel\surnameend} \&
  \bibinfo{author}{Ugo \surnamestart Montanari\surnameend}
  (\bibinfo{year}{2002}): \emph{\bibinfo{title}{Compositional {SOS} and beyond:
  a coalgebraic view of open systems}}.
\newblock {\sl \bibinfo{journal}{\abbrev{Theoretical Computer Science}{Theor.
  Comp. Sci.}}} \bibinfo{volume}{280}(\bibinfo{number}{1-2}), pp.
  \bibinfo{pages}{163--192}, \doi{10.1016/S0304-3975(01)00025-1}.

\bibitemdeclare{article}{FioreHurEquational}
\bibitem{FioreHurEquational}
\bibinfo{author}{Marcelo \surnamestart Fiore\surnameend} \&
  \bibinfo{author}{Chung-Kil \surnamestart Hur\surnameend}
  (\bibinfo{year}{2009}): \emph{\bibinfo{title}{On the construction of free
  algebras for equational systems}}.
\newblock {\sl \bibinfo{journal}{\abbrev{Theoretical Computer Science}{Theor.
  Comp. Sci.}}} \bibinfo{volume}{410}, pp. \bibinfo{pages}{1704--1729},
  \doi{10.1016/j.tcs.2008.12.052}.

\bibitemdeclare{inproceedings}{DBLP:conf/ifipTCS/Fiore00}
\bibitem{DBLP:conf/ifipTCS/Fiore00}
\bibinfo{author}{Marcelo~P. \surnamestart Fiore\surnameend}
  (\bibinfo{year}{2000}): \emph{\bibinfo{title}{Fibred Models of Processes:
  Discrete, Continuous, and Hybrid Systems}}.
\newblock In: {\sl \bibinfo{booktitle}{IFIP TCS}}, {\sl \bibinfo{series}{LNCS}}
  \bibinfo{volume}{1872}, \bibinfo{publisher}{Springer}, pp.
  \bibinfo{pages}{457--473}, \doi{10.1007/3-540-44929-9\_32}.

\bibitemdeclare{inproceedings}{DBLP:conf/lics/Fiore08}
\bibitem{DBLP:conf/lics/Fiore08}
\bibinfo{author}{Marcelo~P. \surnamestart Fiore\surnameend}
  (\bibinfo{year}{2008}): \emph{\bibinfo{title}{Second-Order and
  Dependently-Sorted Abstract Syntax}}.
\newblock In: {\sl \bibinfo{booktitle}{LICS}},
  \bibinfo{organization}{\abbrev{IEEE}{IEEE}}, pp. \bibinfo{pages}{57--68},
  \doi{10.1109/LICS.2008.38}.

\bibitemdeclare{inproceedings}{DBLP:conf/lics/FioreS06}
\bibitem{DBLP:conf/lics/FioreS06}
\bibinfo{author}{Marcelo~P. \surnamestart Fiore\surnameend} \&
  \bibinfo{author}{Sam \surnamestart Staton\surnameend} (\bibinfo{year}{2006}):
  \emph{\bibinfo{title}{A Congruence Rule Format for Name-Passing Process
  Calculi from Mathematical Structural Operational Semantics}}.
\newblock In: {\sl \bibinfo{booktitle}{Proc.\ 21st \abbrev{Symposium on Logic
  in Computer Science}{Logic in Comp. Sci.}}},
  \bibinfo{organization}{\abbrev{IEEE}{IEEE}}, pp. \bibinfo{pages}{49--58},
  \doi{10.1109/LICS.2006.7}.

\bibitemdeclare{inproceedings}{DBLP:conf/lics/FioreT01}
\bibitem{DBLP:conf/lics/FioreT01}
\bibinfo{author}{Marcelo~P. \surnamestart Fiore\surnameend} \&
  \bibinfo{author}{Daniele \surnamestart Turi\surnameend}
  (\bibinfo{year}{2001}): \emph{\bibinfo{title}{Semantics of Name and Value
  Passing}}.
\newblock In: {\sl \bibinfo{booktitle}{Proc.\ 16th \abbrev{Symposium on Logic
  in Computer Science}{Logic in Comp. Sci.}}},
  \bibinfo{organization}{\abbrev{IEEE}{IEEE}}, pp. \bibinfo{pages}{93--104},
  \doi{10.1109/LICS.2001.932486}.

\bibitemdeclare{article}{garner:hal-01246365}
\bibitem{garner:hal-01246365}
\bibinfo{author}{Richard H.~G. \surnamestart Garner\surnameend} \&
  \bibinfo{author}{Tom \surnamestart Hirschowitz\surnameend}
  (\bibinfo{year}{2018}): \emph{\bibinfo{title}{Shapely monads and analytic
  functors}}.
\newblock {\sl \bibinfo{journal}{Journal of Logic and Computation}}
  \bibinfo{volume}{28}(\bibinfo{number}{1}), pp. \bibinfo{pages}{33--83},
  \doi{10.1093/logcom/exx029}.

\bibitemdeclare{article}{tyft}
\bibitem{tyft}
\bibinfo{author}{Jan~Friso \surnamestart Groote\surnameend} \&
  \bibinfo{author}{Frits \surnamestart Vaandrager\surnameend}
  (\bibinfo{year}{1992}): \emph{\bibinfo{title}{Structured Operational
  Semantics and Bisimulation as a Congruence}}.
\newblock {\sl \bibinfo{journal}{\abbrev{Information and Computation}{Inf. and
  Comp.}}} \bibinfo{volume}{100}, pp. \bibinfo{pages}{202--260},
  \doi{10.1016/0890-5401(92)90013-6}.

\bibitemdeclare{article}{hirschowitz:hal-01815328}
\bibitem{hirschowitz:hal-01815328}
\bibinfo{author}{Tom \surnamestart Hirschowitz\surnameend}
  (\bibinfo{year}{2019}): \emph{\bibinfo{title}{Familial monads and structural
  operational semantics}}.
\newblock {\sl \bibinfo{journal}{{PACMPL}}}
  \bibinfo{volume}{3}(\bibinfo{number}{{POPL}}), pp.
  \bibinfo{pages}{21:1--21:28}, \doi{10.1145/3290334}.

\bibitemdeclare{book}{Hovey}
\bibitem{Hovey}
\bibinfo{author}{Mark \surnamestart Hovey\surnameend} (\bibinfo{year}{1999}):
  \emph{\bibinfo{title}{Model Categories}}.
\newblock {\sl \bibinfo{series}{Mathematical Surveys and Monographs, Volume 63,
  AMS (1999)}}~\bibinfo{volume}{63}, \bibinfo{publisher}{American Mathematical
  Society}, \doi{10.1090/surv/063}.

\bibitemdeclare{inproceedings}{DBLP:conf/lics/JoyalNW93}
\bibitem{DBLP:conf/lics/JoyalNW93}
\bibinfo{author}{Andr{\'e} \surnamestart Joyal\surnameend},
  \bibinfo{author}{Mogens \surnamestart Nielsen\surnameend} \&
  \bibinfo{author}{Glynn \surnamestart Winskel\surnameend}
  (\bibinfo{year}{1993}): \emph{\bibinfo{title}{Bisimulation and open maps}}.
\newblock In: {\sl \bibinfo{booktitle}{Proc.\ 8th \abbrev{Symposium on Logic in
  Computer Science}{Logic in Comp. Sci.}}},
  \bibinfo{organization}{\abbrev{IEEE}{IEEE}}, pp. \bibinfo{pages}{418--427},
  \doi{10.1109/LICS.1993.287566}.

\bibitemdeclare{book}{LeinsterCats}
\bibitem{LeinsterCats}
\bibinfo{author}{Tom \surnamestart Leinster\surnameend} (\bibinfo{year}{2014}):
  \emph{\bibinfo{title}{Basic Category Theory}}.
\newblock {\sl \bibinfo{series}{Cambridge Studies in Advanced Mathematics}}
  \bibinfo{volume}{143}, \bibinfo{publisher}{Cambridge University Press},
  \doi{10.1017/CBO9781107360068}.

\bibitemdeclare{book}{MacLane:cwm}
\bibitem{MacLane:cwm}
\bibinfo{author}{Saunders \surnamestart {Mac Lane}\surnameend}
  (\bibinfo{year}{1998}): \emph{\bibinfo{title}{Categories for the Working
  Mathematician}}, \bibinfo{edition}{2nd} edition.
\newblock {\sl \bibinfo{series}{Graduate Texts in
  Mathematics}}~\bibinfo{volume}{5}, \bibinfo{publisher}{Springer},
  \doi{10.1007/978-1-4757-4721-8}.

\bibitemdeclare{book}{MM}
\bibitem{MM}
\bibinfo{author}{Saunders \surnamestart {Mac Lane}\surnameend} \&
  \bibinfo{author}{Ieke \surnamestart Moerdijk\surnameend}
  (\bibinfo{year}{1992}): \emph{\bibinfo{title}{Sheaves in Geometry and Logic:
  A First Introduction to Topos Theory}}.
\newblock \bibinfo{series}{Universitext}, \bibinfo{publisher}{Springer},
  \doi{10.1007/978-1-4612-0927-0}.

\bibitemdeclare{book}{Milner80}
\bibitem{Milner80}
\bibinfo{author}{Robin \surnamestart Milner\surnameend} (\bibinfo{year}{1980}):
  \emph{\bibinfo{title}{A Calculus of Communicating Systems}}.
\newblock {\sl \bibinfo{series}{LNCS}}~\bibinfo{volume}{92},
  \bibinfo{publisher}{Springer}, \doi{10.1007/3-540-10235-3}.

\bibitemdeclare{article}{mousavi2007sos}
\bibitem{mousavi2007sos}
\bibinfo{author}{MohammadReza \surnamestart Mousavi\surnameend},
  \bibinfo{author}{Michel~A. \surnamestart Reniers\surnameend} \&
  \bibinfo{author}{Jan~Friso \surnamestart Groote\surnameend}
  (\bibinfo{year}{2007}): \emph{\bibinfo{title}{SOS Formats and Meta-Theory: 20
  Years After}}.
\newblock {\sl \bibinfo{journal}{\abbrev{Theoretical Computer Science}{Theor.
  Comp. Sci.}}} \bibinfo{volume}{373}(\bibinfo{number}{3}), pp.
  \bibinfo{pages}{238--272}, \doi{10.1016/j.tcs.2006.12.019}.

\bibitemdeclare{phdthesis}{Peressotti}
\bibitem{Peressotti}
\bibinfo{author}{Marco \surnamestart Peressotti\surnameend}
  (\bibinfo{year}{2017}): \emph{\bibinfo{title}{Coalgebraic Semantics of
  Self-Referential Behaviours}}.
\newblock Ph.D. thesis, \bibinfo{school}{University of Udine},
  \doi{10.13140/rg.2.2.26899.07203}.

\bibitemdeclare{techreport}{PlotkinSOS}
\bibitem{PlotkinSOS}
\bibinfo{author}{Gordon~D. \surnamestart Plotkin\surnameend}
  (\bibinfo{year}{1981}): \emph{\bibinfo{title}{A Structural Approach to
  Operational Semantics}}.
\newblock \bibinfo{type}{DAIMI Report} \bibinfo{number}{FN-19},
  \bibinfo{institution}{Computer Science Department, Aarhus University}.

\bibitemdeclare{book}{riehl}
\bibitem{riehl}
\bibinfo{author}{Emily \surnamestart Riehl\surnameend} (\bibinfo{year}{2014}):
  \emph{\bibinfo{title}{Categorical Homotopy Theory}}.
\newblock {\sl \bibinfo{series}{New Mathematical
  Monographs}}~\bibinfo{volume}{24}, \bibinfo{publisher}{Cambridge University
  Press}, \doi{10.1017/CBO9781107261457}.

\bibitemdeclare{book}{SangioRutten}
\bibitem{SangioRutten}
\bibinfo{editor}{Davide \surnamestart Sangiorgi\surnameend} \&
  \bibinfo{editor}{Jan \surnamestart Rutten\surnameend}, editors
  (\bibinfo{year}{2011}): \emph{\bibinfo{title}{Advanced Topics in Bisimulation
  and Coinduction}}.
\newblock {\sl \bibinfo{series}{Cambridge Tracts in Theoretical Computer
  Science}}~\bibinfo{volume}{52}, \bibinfo{publisher}{Cambridge University
  Press}, \doi{10.1017/CBO9780511792588}.

\bibitemdeclare{inproceedings}{DBLP:conf/lics/Staton08}
\bibitem{DBLP:conf/lics/Staton08}
\bibinfo{author}{Sam \surnamestart Staton\surnameend} (\bibinfo{year}{2008}):
  \emph{\bibinfo{title}{General Structural Operational Semantics through
  Categorical Logic}}.
\newblock In: {\sl \bibinfo{booktitle}{Proc.\ 23rd \abbrev{Symposium on Logic
  in Computer Science}{Logic in Comp. Sci.}}}, pp. \bibinfo{pages}{166--177},
  \doi{10.1109/LICS.2008.43}.

\bibitemdeclare{inproceedings}{plotkin:turi:bialgebraic}
\bibitem{plotkin:turi:bialgebraic}
\bibinfo{author}{Daniele \surnamestart Turi\surnameend} \&
  \bibinfo{author}{Gordon~D. \surnamestart Plotkin\surnameend}
  (\bibinfo{year}{1997}): \emph{\bibinfo{title}{Towards a Mathematical
  Operational Semantics}}.
\newblock In: {\sl \bibinfo{booktitle}{Proc.\ 12th \abbrev{Symposium on Logic
  in Computer Science}{Logic in Comp. Sci.}}}, pp. \bibinfo{pages}{280--291},
  \doi{10.1109/LICS.1997.614955}.

\bibitemdeclare{article}{Weber:famfun}
\bibitem{Weber:famfun}
\bibinfo{author}{Mark \surnamestart Weber\surnameend} (\bibinfo{year}{2007}):
  \emph{\bibinfo{title}{Familial 2-functors and parametric right adjoints}}.
\newblock {\sl \bibinfo{journal}{Theory and Applications of Categories}}
  \bibinfo{volume}{18}(\bibinfo{number}{22}), pp. \bibinfo{pages}{665--732}.

\end{thebibliography}

\end{document}